\theoremstyle{plain}
\newtheorem{theorem}{Theorem}[section]
\newtheorem{corollary}{Corollary}[theorem]
\newcommand{\blind}{0}
\newcommand{\V}[1]{\ensuremath{\boldsymbol{#1}}\xspace}
\newcommand{\M}[1]{\ensuremath{\boldsymbol{#1}}\xspace}
\newtheorem{lemma}{Lemma}[section]
\numberwithin{equation}{section}
\def\twoImages#1#2#3#4#5#6 
\newenvironment{example}[1][Example]{\begin{trivlist}
\item[\hskip \labelsep {\bfseries #1}]}{\end{trivlist}}
\begin{document}

\def\spacingset#1{\renewcommand{\baselinestretch}%
{#1}\small\normalsize} \spacingset{1}

\if0\blind
{
  \title{\bf Generalized linear models with low rank effects for network data}
  \author{
    Yun-Jhong Wu, Elizaveta Levina, and Ji Zhu  \\
    Department of Statistics, University of Michigan}
  \maketitle
} \fi

\if1\blind
{
  \bigskip
  \bigskip
  \bigskip
  \begin{center}
    {\LARGE\bf Link prediction for partially observed networks}
\end{center}
  \medskip
} \fi

\bigskip
\begin{abstract}
Networks are a useful representation for data on connections between units of interests, but the observed connections are often noisy and/or include missing values.  One common approach to network analysis is to treat the network as a realization from a random graph model, and estimate the underlying edge probability matrix, which is sometimes referred to as network denoising.  Here we propose a generalized linear model with low rank effects to model network edges.  This model can be applied to various types of networks, including directed and undirected, binary and weighted, and it can naturally utilize additional information such as node and/or edge covariates. We develop an efficient projected gradient ascent algorithm to fit the model, establish asymptotic consistency, and demonstrate empirical performance of the method on both simulated and real networks.
\end{abstract}

\noindent%
{\it Keywords:} Network data; generalized linear models; low-rank approximation
\vfill

\newpage
\spacingset{1.45} 

\section{Introduction}
Networks are widely used to represent and analyze data in many domains, for example, for social, biological, and communication systems.  Each network consists of nodes and edges.  For example, in social networks, nodes may correspond to people and edges represent friendships; in biological networks, nodes may correspond to genes or proteins while edges represent regulatory relationships.  Besides nodes and edges, other information is often available in the form of node and/or edge covariates, such as people's demographic information or the closeness of a friendship  in social networks, and proteins' chemical components or the strength of the regulatory relationship in biological networks.  

One fundamental problem in network analysis is to understand the mechanism that generates the edges by estimating the expectation of the adjacency matrix, sometimes referred to as network denoising.  The expectation gives probabilities of links for every pair, which can be further used to perform link prediction;  in fact for link prediction any monotone transformation of the link probabilities is sufficient.    For binary networks, link prediction can be framed as a classification problem, which presence/absence of edge as the class label for each paper, and some sort of score for each pair of nodes (e.g. an estimated probability of link) used to predict the class.  

Most approaches to the estimating the probabilities of edges (or more generally scores) use the information from node features when available, and/or network topology such as the number of common neighbors, etc.    Many approaches are based on homophily, which means that the more ``similar'' two nodes are, the more likely they are to become connected.  Homophily has been widely observed in social networks \citep{McPherson2001} and other contexts \citep{Zhou2009}.   If homophily is assumed, estimating adjacency matrices is closely related to the question of how to measure similarity between nodes. For node features, any appropriate similarity measure for vectors can be used.    Multiple measures based on network topology are also available;  see e.g., Section 3 in \cite{Lu2011}.   Other proposals include aggregating several similarities such as the number of 2-path and 3-path between two nodes \citep{Zhou2009} and using kernels to measure the similarity between node pairs and combining it with the support vector machine (SVM) for classification in the context of estimating protein-protein interactions \citep{Ben-Hur2005}. 

Alternatively, one can embed nodes in an Euclidean space and measure the similarity between nodes according to the distance between the nodes' latent positions. This approach includes various probabilistic network models such as the latent space model \citep{Hoff2002}, the latent variable model \citep{Hoff2007}, the latent feature model \citep{Miller2009}, the latent factor model \citep{Hoff2008}, the latent variable models with Gaussian mixture positions \cite{Krivitskyetal2009}, and the Dirichlet network model\cite{Williamson2016}.   In all these models, the latent positions have to be estimated via Markov Chain Monte Carlo (MCMC), which is very time consuming.  More computationally efficient approaches have been developed.  For example, the leading eigenvectors  of the graph Laplacian can be used to embed the nodes in a low-dimensional space \citep[e.g.][]{Kunegis2009} by spectral decomposition, and their embedding coordinates can be veiwed at the latent node positions. Other recent efforts have been devoted to fitting latent space models by stochastic variational inference \citep{Zhu2012a}
and gradient descent algorithms \citep{Ma2017}.    The latter paper was written simultaneously and independently of the current work, and while it uses a similar algorithm in optimizaiton, it fits a different model, focuses on the problem of latent position estimation rather than link prediction, and, unlike ours, does not cover the directed case.   

In another related line of work, graphon estimation methods estimate the edge probability matrix under node exchangeability and various additional assumptions on the matrix (smoothness, low-rankness, etc)  \citep[e.g.][]{Choi2014,Yang2014,Olhede2014,Gao2015,Zhang2015}. However, when node or edge features are available, exchangeability does not apply.    Instead, a common approach is to aggregate information on the features and  multiple similarity indexes to create a single score for predicting links. For example, \cite{Kashima2009} and \cite{Menon2011} treat topology-based similarities as edge attributes and propose an SVM-based approach for edge estimation.     

Assumptions other than homophily have also been considered, such as hierarchical network structure \citep{Clauset2008}, structural equivalence \citep{Hoff2007}.  In another approach, \citet{Zhao2013} used pair similarity instead of node similarity for edge prediction, arguing that edges between similar pairs of nodes should have similar probability of occurring.

The problem of link prediction is also related to the problem of matrix completion, which is commonly solved under low rank constraints \citep[e.g.][]{Candes2009}.   In fact if the network is undirected and binary without any covariates, our proposed method is equivalent to the 1-bit matrix completion algorithm of  \cite{Davenport2013}, who established consistency of the maximum likelihood estimator for this setting. However, the 1-bit matrix completion formulation is much narrower:  it does not allow for covariates and, crucially, assumes that the links are missing completely at random with equal probability, which is not a realistic assumption for networks. 

The model we propose here represents the probability of an edge through a small number of parameters, like the latent space models; but unlike previous work, all we assume is a general low rank structure, without requiring anything more specific.   This makes our method easily applicable to many types of networks: directed and undirected, binary and weighted, with and without node/edge covariates.    Unlike latent space models, we do not require computationally expensive MCMC;  instead, we fit the proposed model through an efficient projected gradient algorithm.  In addition to computational efficiency, our method also has attractive theoretical properties, such as consistency for network estimation under the low rank assumption on the true probability matrix.  

The rest of this article is organized as follows.   The proposed model and the estimation algorithm are presented in Section \ref{sec_method}. In Section \ref{sec_consistency}, we establish several theoretical guarantees including consistency.  Numerical evaluation of the proposed method and comparisons to other network estimation approaches on simulated networks are presented in Section \ref{sec_simulation}.  In Section \ref{sec_data_analysis}, we illustrate the proposed method on two real networks, the friendship network of users of the Last.fm music website and the C. Elegans neural network.  Section \ref{sec_discussion} concludes the paper with discussion and future work.  All proofs are given in the Appendix.

  S\section{Generalized linear models for network data with low rank effects}\label{sec_method}
We start with setting up notation.   The data consist of a single observed $n\times n$ adjacency matrix $\M{A}=[A_{ij}]_{n\times n}$, where $A_{ij}$ represents the edge from node $i$ to node $j$, which can be either a value binary (0/1) or a weight.  If additional information on nodes and/or edges is available, we represent it as an $m$-dimensional attribute vector for each pair of nodes $i$ and $j$, denoted by $\boldsymbol{x}_{ij}=(x_{ij1},\dots,x_{ijm})^\top$.   If the attributes are attached to nodes rather than edges, we convert them to edge attributes using a similarity measure, discussed in more detail below.   Our goal is to compute a score for each pair of nodes to represent the strength of an edge that may connect them. A natural score is the expected value $\M{P}=[p_{ij}]_{n\times n}=\mathbb{E}[\M{A}]$.   Then we can view the problem as a generalized regression question, fitting the model 
\[
p_{ij}=g(\boldsymbol{x}_{ij}),
\]
where $g$ is a mean function. 

\subsection{Generalized linear models for network data}
A natural way to connect covariates to the strength of network edges is to use the generalized linear model (GLM).   For example, logistic regression and logit models have been used for fitting binary directed networks \citep{Wasserman1996}. It is straightforward to generalize this approach to various types of networks by considering a generalized linear model
\begin{align}
L(p_{ij})=\theta_{ij} + \boldsymbol{x}_{ij}^\top \V{\beta}, \label{lrglm}
\end{align}
where $L$ is a link function to be specified and $\V{\beta}\in\mathbb{R}^{m}$ is a vector of coefficients.  As normally done in GLM, we assume that the distribution of $A_{ij}$ only depends on covariates through their linear combination with an unknown coefficient vector $\V{\beta}$,  
and that edges are independent conditional on covariates. The parameter $\theta_{ij}$ represents an interaction between nodes $i$ and $j$ for $i,j=1,\dots,n$.    Further assuming an exponential family distribution,  the conditional distribution of $\M{A}$ with the mean matrix $\M{P}$ takes the canonical form
\begin{align}
f_{(\M{\Theta},\V{\beta})} (\M{A} \mid \mathcal{X})&=\prod_{ij}f_{(\theta_{ij},\V{\beta})}(A_{ij}\mid \boldsymbol{x}_{ij}) =\prod_{ij}c(A_{ij})\exp\bigg(\eta_{ij} A_{ij} -b(\eta_{ij})\bigg), \label{lrem:canonical_form}
\end{align}
where $\M{\Theta}=[\theta_{ij}]_{[n \times n]}$, $\mathcal{X}=[\M{X}_1,\dots,\M{X}_m]\in\mathbb{R}^{n\times n\times m}$,  $\M{X}_k=[x_{ijk}]_{n\times n}$, $k=1, \ldots, m$, $\eta_{ij}=\theta_{ij}+\boldsymbol{x}_{ij}^\top \V{\beta}$, and the corresponding canonical link function is given by $L^{-1}=g=b'$. 
This general setting includes, for example, the logistic model for fitting binary networks and binomial and Poisson models for integer-weighted networks. Extending it to multinomial logistic models for networks with signed or labeled edges is also straightforward.  

Model \eqref{lrglm} involves more parameters than can be fitted without regularization or additional assumptions on $\M{\Theta}$. One possibility is to impose regularization through the commonly occurring dependency among edges in networks known as transitivity: if A and B are friends, and B and C are friends, then A and C are more likely to be friends.   This idea has been utilized by \cite{Hoff2005}, in which the random effects model was extended to the so-called bilinear mixed-effects model to model the joint distribution of adjacent edges.  Here we take a different and perhaps more general approach by imposing a low rank constraint on the effects matrix, implicitly inducing sharing information among the edges; this allows us to both model individual node effects and share information, which seems to be more appropriate for network data than the random effects modeling assumption of random and identically distributed $\theta_{ij}$'s.   

\subsection{The low rank effects model}
In general, regularization can be applied to either $\M{\Theta}$ or $\V{\beta}$ or both;  a sparsity constraint on $\V{\beta}$ would be natural when the number of attributes $m$ is large, but the more important parameter to constrain here is $\M{\Theta}$, which contains $n^2$ parameters.   A natural general constraint that imposes structure without parametric assumptions is constraining the rank of $\M{\Theta}$, assuming 
\begin{align}
L(\M{P})=\M{\Theta}+\mathcal{X}\otimes\V{\beta}, ~~\mathrm{rank} (\M{\Theta}) \leq r, \label{lrglm_mat}
\end{align}
where $\mathcal{X}\otimes\V{\beta}=\sum^m_{k=1}\beta_k\mathbf{X}_k$, and, in a slight abuse of notation, $L(\M{P})$ is the link function applied element-wise to the matrix $\M{P}$.  

The rank constrained model \eqref{lrglm_mat} is related to latent space models, for example, the eigenmodel proposed by \cite{Hoff2007} for undirected binary networks.  The projection model assumes that the edge probability is given by
\begin{align}
\mathrm{logit}(p_{ij})=\alpha + \mathbf{z}_i^\top \mathbf{\Lambda}\mathbf{z}_j+\boldsymbol{x}_{ij}^\top\V{\beta}, \label{proj}
\end{align}
where $\mathbf{z}_i \in \mathbb{R}^{(r-1)}$ represents the position of node $i$ in a latent space. Note that the $n\times n$ matrix $\alpha\mathbf{11}^\top+\mathbf{Z}\mathbf{\Lambda}\mathbf{Z}^\top$, where $\mathbf{Z}=[\mathbf{z}_1 \cdots \mathbf{z}_n]^\top\in\mathbb{R}^{n\times {(r-1)}}$, is at most of rank $r$. By setting $L$ to be the logit link, the eigenmodel can be obtained as a special case of the  low rank effects model \eqref{lrglm_mat}, although the fitting method proposed for the eigenmodel by \cite{Hoff2007} is much more computationally intensive.   

Full identifiability for \eqref{lrglm_mat} requires additional assumptions, even though the mean matrix $\M{P}$ is always identifiable and so is $\mathbf{\Theta}+\mathcal{X}\otimes\V{\beta}$.   For $\mathbf{\Theta}$ and $\V{\beta}$ to be individually identifiable, $\mathcal{X}\otimes\V{\beta}$ cannot be of low rank, and $\mathbf{X}_k$'s cannot be collinear. Formally, we make the following assumptions:
\begin{enumerate}
\item[A1.]  $\mathrm{rank}(\mathcal{X}\otimes\V{\beta})>r$ for all $\V{\beta}\neq \mathbf{0}$;
\item[A2.]  $\mathrm{vec}(\mathbf{X}_1),\dots,\mathrm{vec}(\mathbf{X}_m)$ are linearly independent.
\end{enumerate}
Assumption A1 implies that $\mathcal{X}\otimes\V{\beta}$ is linearly independent of $\M{\Theta}$, and assumption A2 ensures that $\V{\beta}$ is identifiable.

\subsection{Estimation}\label{estimation}
In principle, estimates of $\mathbf{\Theta}$ and $\V{\beta}$ can be obtained by maximizing the constrained log-likelihood as follows
\begin{align}
(\overline{\mathbf{\Theta}},\overline{\V{\beta}})=\operatorname*{\arg\max}_{(\mathbf{\Theta},\V{\beta}):\mathrm{rank}(\mathbf{\Theta}) \leq r}\ell_{\mathbf{A},\mathcal{X}}(\mathbf{\Theta},\V{\beta}), \label{opt1}
\end{align}
where $\ell_{\mathbf{A},\mathcal{X}}$ is the log-likelihood based on the distribution in \eqref{lrem:canonical_form}. 
Note the distinction between directed and undirected networks is not crucial here because the estimators will automatically be symmetric when $\mathbf{X}_1,\dots,\mathbf{X}_m$ and $\mathbf{A}$ are symmetric.

Although in practice certain algorithms such as the alternating direction method may be applied to solve \eqref{opt1}, no computationally feasible algorithm is guaranteed to find the global maximum due to the non-convexity of the rank constraint $\mathrm{rank}(\M{\Theta}) \leq r$. To circumvent this, the rank constraint is often replaced with a convex relaxation \citep[e.g.][]{Candes2009}.   Let $\operatorname{conv}(\mathcal{S})$ denote the convex hull of set $\mathcal{S}$, $\sigma_i(\mathbf{\Theta})$ the $i$-th largest singular value of $\mathbf{\Theta}$, and $\|\mathbf{\Theta}\|_*$ the nuclear norm of $\mathbf{\Theta}$.  Then a common relaxation is 
\begin{align*}
& \operatorname{conv} \{\mathbf{\Theta}:\mathrm{rank}(\mathbf{\Theta})\leq r,\|\mathbf{\Theta}\|_2\leq 1\} \\
\ &=\operatorname{conv}\{\mathbf{\Theta}:\mathbf{\Theta}\mbox{ has at most $r$ non-zero singular values and }\sigma_i(\mathbf{\Theta})\leq 1~\forall i\} \\
\ &=\{\mathbf{\Theta}:\sum^n_{i=1}\sigma_k(\mathbf{\Theta})\leq r\}  =\{\mathbf{\Theta}:\|\mathbf{\Theta}\|_*\leq r\} . 
\end{align*}
 Using this relaxation, one can estimate $\mathbf{\Theta}$ and $\V{\beta}$ by solving the problem 
\begin{align}
(\widetilde{\mathbf{\Theta}},\widetilde{\V{\beta}})=\operatorname*{\arg\max}_{(\mathbf{\Theta},\V{\beta}):\|\mathbf{\Theta}\|_*\leq R}\ell_{\mathbf{A},\mathcal{X}}(\mathbf{\Theta},\V{\beta}), \label{opt2}
\end{align}
where $R$ is a tuning parameter.  The exponential family assumption and the use of the nuclear norm ensure the strict convexity of \eqref{opt2} as a function of $\theta_{ij}$'s  and therefore the uniqueness of the maximum. Finally, the mean matrix $\M{P}$ can be estimated by $\widetilde{\M{P}}=L^{-1}(\widetilde{\mathbf{\Theta}}+\mathcal{X}\otimes\widetilde{\V{\beta}})$.  

The optimization problem \eqref{opt2} can be solved by the standard projected gradient algorithm \citep{Boyd2009}. Specifically, the main (block-coordinate) updating formulas are
\begin{enumerate}
\item ${\beta}^{(t+1)}_k\gets \beta^{(t)}_k+\gamma_t\nabla_{\beta_k}\ell_{\mathbf{A},\mathcal{X}}(\mathbf{\Theta}^{(t)},\V{\beta})|_{\V{\beta}=\V{\beta}^{(t)}}$ for $k=1,\dots,m$
\item $\mathbf{\Theta}^{(t+1)}\gets \mathcal{P}\Big(\mathbf{\Theta}^{(t)}+\gamma_t\nabla_{\mathbf{\Theta}}\ell_{\mathbf{A},\mathcal{X}}(\mathbf{\Theta},\V{\beta}^{(t+1)})|_{\mathbf{\Theta}=\mathbf{\Theta}^{(t)}}\Big)$
\end{enumerate}
where $\gamma_t$ is a step size and $\mathcal{P}$ is a projection operator onto the set $\{\mathbf{\Theta}: \|\mathbf{\Theta}\|_*\leq R\}$. The first updating formula is the same as the standard gradient ascent algorithm since there is no constraint on $\V{\beta}$. The second formula consists of a gradient ascent step and a projection operation to ensure that the algorithm produces a solution in the feasible set. Thus, for solving \eqref{opt2}, we have
\begin{align}
\V{\beta}^{(t+1)}_k&\gets \V{\beta}^{(t)}_k+\gamma_t\Big(\mathrm{tr}\big(\mathbf{X}_k^\top(\mathbf{A}-L^{-1}(\mathbf{\Theta}^{(t)}+\mathcal{X}\otimes\V{\beta}^{(t)}))\big)\Big)\mbox{ for }k=1,\dots,m \notag \\
\mathbf{\Theta}^{(t+1)}&\gets \mathcal{P}_{c_t}\Big(\mathbf{\Theta}^{(t)}+\gamma_t\big(\mathbf{A}-L^{-1}(\mathbf{\Theta}^{(t)}+\mathcal{X}\otimes\V{\beta}^{(t+1)})\big)\Big), \label{lrem:opt2_update2}
\end{align}
where $\mathcal{P}_{c_t}(\mathbf{\Theta})=\sum^n_{i=1}(\sigma_i-{c_t})_+\mathbf{u}_i\mathbf{v}_i^\top$  $\mathcal{P}_{c_t}$ is a soft-thresholding operator, $\mathbf{\Theta}=\sum^n_{i=1}\sigma_i\mathbf{u}_i\mathbf{v}_i^\top$ is the singular value decomposition (SVD) of $\mathbf{\Theta}$, and $c_t=\operatorname{\arg\min}_c\{\sum^n_{i=1}(\sigma_i-c)_+\leq R\}$. 
Since the log-likelihood is continuously differentiable, convergence of the algorithm is guaranteed by choosing $\gamma_t<K^{-1}$ when the gradient of the log-likelihood is $K$-Lipschitz continuous on the feasible set. For example, in the case of the logit link, $K=1$, and for the logarithm link (when the edge weight follows a Poisson distribution), $K=\exp(\|\mathbf{\Theta}\|_{\max}+\max_{(i,j)}\boldsymbol{x}_{ij}^\top\V{\beta})$, where $\|\mathbf{\Theta}\|_{\max}$ denotes the maximum absolute entry of $\mathbf{\Theta}$.  See \cite{Boyd2009} for theoretical details and a variety of accelerated projected gradient algorithms. 

The updating formulas require solving a full SVD in each iteration, which can be computationally expensive, especially when $n$ is large.  In practice, if the matrix $\mathbf{\Theta}^{(t)}+\gamma_t\big(\mathbf{A}-L^{-1}(\mathbf{\Theta}^{(t)}+\mathcal{X}\otimes\V{\beta}^{(t+1)})\big)$ is approximately low rank, solving the SVD truncated at rank $s$ for some $s>r$ usually gives the same optimum.  Thus, we consider an alternative criterion to \eqref{opt2} to estimate $\mathbf{\Theta}$ and $\V{\beta}$, i.e.
\begin{align}
(\widehat{\mathbf{\Theta}},\widehat{\V{\beta}})=\operatorname*{\arg\max}_{(\mathbf{\Theta},\V{\beta}):\|\mathbf{\Theta}\|_*\leq R, \mathrm{rank}(\mathbf{\Theta})\leq s}\ell_{\mathbf{A},\mathcal{X}}(\mathbf{\Theta},\V{\beta}), \label{opt3}
\end{align}
and solve the optimization problem by replacing the nuclear-norm projection operator in \eqref{lrem:opt2_update2} with $\mathcal{P}_{(R,s)}=\sum^s_{i=1}(\sigma_i-c_t)_+\mathbf{u}_i\mathbf{v}_i^\top$, with $c_t$ as defined above.  Finally, the mean matrix $\M{P}$ is estimated by 
$$\widehat{\M{P}}=L^{-1}(\widehat{\mathbf{\Theta}}+\mathcal{X}\otimes\widehat{\V{\beta}}).$$
Although the optimization problem in \eqref{opt3} is non-convex, as illustrated in Figure \ref{fig_opt3}, the algorithm is computationally efficient, and we will also show that the estimator enjoys theoretical guarantees  similar to those of \eqref{opt2}.

\begin{figure}
\begin{center}
       \makebox[\textwidth]{\begin{subfigure}[b]{0.4\textwidth}
\begin{tikzpicture}[z=-0.5cm,->,thick, scale=0.5]
    \draw[-] (5, 0, 0)node[right]{$\sigma_1$};
    \draw[-] (0, 5, 0)node[right]{$\sigma_2$};
    \draw[-] (0, 0, 5)node[left]{$\sigma_3$};

    \draw[color=black] (0,0,0) -- (5,0,0);
    \draw[color=black] (0,0,0) -- (0,5,0);
    \draw[color=black] (0,0,0) -- (0,0,5);
    \draw[color=black!50, line width=1.5mm,-] (0,0,0) -- (3.5,0,0);
    \draw[color=black!50, line width=1.5mm,-] (0,0,0) -- (0,3.5,0);
    \draw[color=black!50, line width=1.5mm,-] (0,0,0) -- (0,0,3.5);

\end{tikzpicture}
                \caption{$\mathrm{rank}(\mathbf{\Theta})\leq 1$}
                \label{fig_opt1}
        \end{subfigure}
                       \begin{subfigure}[b]{0.4\textwidth}
\begin{tikzpicture}[z=-0.5cm,->,thick, scale=0.5]
    \draw[-] (5, 0, 0)node[right]{$\sigma_1$};
    \draw[-] (0, 5, 0)node[right]{$\sigma_2$};
    \draw[-] (0, 0, 5)node[left]{$\sigma_3$};

    \draw[color=black] (0,0,0) -- (5,0,0);
    \draw[color=black] (0,0,0) -- (0,5,0);
    \draw[color=black] (0,0,0) -- (0,0,5);
    \filldraw[ball color=black!30] (3.5,0,0) -- (0,3.5,0) -- (0,0,3.5) --cycle;
\end{tikzpicture}

                \caption{$\|\mathbf{\Theta}\|_*\leq R$}
                \label{fig_opt2}
        \end{subfigure}
                       \begin{subfigure}[b]{0.4\textwidth}
\begin{tikzpicture}[z=-0.5cm,->,thick, scale=0.5]
    \draw[-] (5, 0, 0)node[right]{$\sigma_1$};
    \draw[-] (0, 5, 0)node[right]{$\sigma_2$};
    \draw[-] (0, 0, 5)node[left]{$\sigma_3$};

    \draw[color=black] (0,0,0) -- (5,0,0);
    \draw[color=black] (0,0,0) -- (0,5,0);
    \draw[color=black] (0,0,0) -- (0,0,5);
    \filldraw[ball color=black!30] (0,0,0) -- (0,3.5,0) -- (0,0,3.5) --cycle;
    \filldraw[ball color=black!30] (0,0,0) -- (3.5,0,0) -- (0,0,3.5) --cycle;
    \filldraw[ball color=black!30] (0,0,0) -- (0,3.5,0) -- (3.5,0,0) --cycle;
\end{tikzpicture}
                \caption{$\mathrm{rank}(\mathbf{\Theta})\leq 2$ and $\|\mathbf{\Theta}\|_*\leq R$}
                \label{fig_opt3}
        \end{subfigure}}

        \end{center}
\caption{Constraints in optimization problems (\ref{opt1}), (\ref{opt2}), and (\ref{opt3}) in the space of singular values of $\mathbf{\Theta}$}
\end{figure}
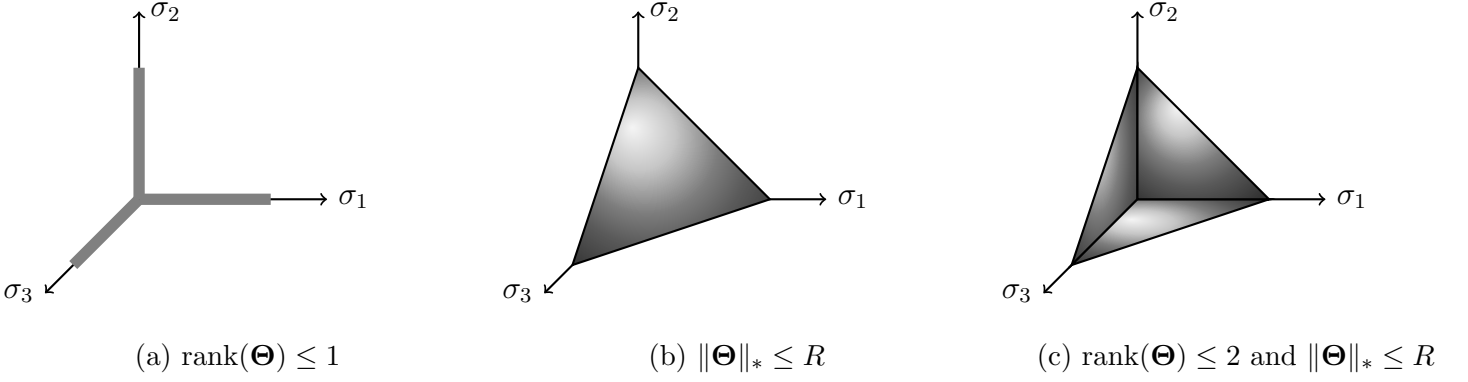

\section{Theoretical properties}\label{sec_consistency}
In this section, we show asymptotic properties of our estimates for the low rank GLM,  in Frobenius matrix norm.
We make the following additional assumptions on the parameter space and covariates:

\begin{itemize}
\item[A3.] 
$\|\mathbf{\Theta}\|_{\max}\leq K_\theta ~\mbox{and}~ \mathrm{rank}(\mathbf{\Theta})\leq r$ 
\item[A4.]   $\|\V{\beta}\|_2 \leq K_\beta$ 
\item[A5.]  $\|\boldsymbol{x}_{ij}\|_2\leq K_x$ for all $i,j$
\end{itemize}

\begin{theorem}\label{thm1}
Under assumptions A3-A5, we have
\[n^{-1}\|\widetilde{\mathbf{P}}-\mathbf{P}\|_F\stackrel{p}{\longrightarrow}0,\]
where $\widetilde{\mathbf{P}}=L^{-1}(\widetilde{\mathbf{\Theta}}+\mathcal{X}\otimes\widetilde{\V{\beta}})$, and $\widetilde{\mathbf{\Theta}}$ and $\widetilde{\V{\beta}}$ are obtained from (\ref{opt2}).
\end{theorem}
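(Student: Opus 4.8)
The plan is to reduce the claim to a bound on the natural parameter and then exploit the optimality of the estimator through a \emph{basic inequality} of the kind familiar from nuclear-norm-constrained $M$-estimation. Write $\mathbf{H}_0=\mathbf{\Theta}_0+\mathcal{X}\otimes\V{\beta}_0$ for the true natural-parameter matrix (with entries $\eta_{0,ij}$) and $\widetilde{\mathbf{H}}=\widetilde{\mathbf{\Theta}}+\mathcal{X}\otimes\widetilde{\V{\beta}}$ for its estimate, so that $\mathbf{P}=L^{-1}(\mathbf{H}_0)=b'(\mathbf{H}_0)$ (entrywise) and $\widetilde{\mathbf{P}}=b'(\widetilde{\mathbf{H}})$. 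Under A3--A5 the entries of $\mathbf{H}_0$, and of any feasible $\widetilde{\mathbf{H}}$, lie in the bounded interval $[-(K_\theta+K_xK_\beta),\,K_\theta+K_xK_\beta]$, on which the variance function $b''$ is bounded above and below by constants $0<\underline{\kappa}\le\overline{\kappa}<\infty$. In particular $b'=L^{-1}$ is $\overline{\kappa}$-Lipschitz there, so $\|\widetilde{\mathbf{P}}-\mathbf{P}\|_F\le\overline{\kappa}\,\|\widetilde{\mathbf{H}}-\mathbf{H}_0\|_F$, and it suffices to prove $n^{-1}\|\widetilde{\mathbf{H}}-\mathbf{H}_0\|_F\stackrel{p}{\longrightarrow}0$.

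Writing $\langle\mathbf{M},\mathbf{N}\rangle=\mathrm{tr}(\mathbf{M}^\top\mathbf{N})$ and using the canonical form \eqref{lrem:canonical_form}, the log-likelihood is $\ell(\mathbf{\Theta},\V{\beta})=\langle\mathbf{H},\mathbf{A}\rangle-\sum_{ij}b(\eta_{ij})$ up to a constant free of the parameters. Since $(\mathbf{\Theta}_0,\V{\beta}_0)$ is feasible for \eqref{opt2} (with $R\asymp n$; see below), optimality gives $\ell(\widetilde{\mathbf{\Theta}},\widetilde{\V{\beta}})\ge\ell(\mathbf{\Theta}_0,\V{\beta}_0)$. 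Decomposing $\mathbf{A}=\mathbf{P}+\mathbf{E}$ with $\mathbf{E}=\mathbf{A}-\mathbf{P}$ mean-zero and independent across entries, and using $b'(\eta_{0,ij})=p_{ij}$ to recognize the Bregman divergence $D_b(a,c)=b(a)-b(c)-b'(c)(a-c)$, the inequality rearranges to the key bound
\[
\sum_{ij}D_b(\widetilde{\eta}_{ij},\eta_{0,ij})\;\le\;\langle\widetilde{\mathbf{H}}-\mathbf{H}_0,\mathbf{E}\rangle .
\]

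The left side is bounded below by strong convexity, $D_b(\widetilde{\eta}_{ij},\eta_{0,ij})\ge\tfrac{\underline{\kappa}}{2}(\widetilde{\eta}_{ij}-\eta_{0,ij})^2$, hence is at least $\tfrac{\underline{\kappa}}{2}\|\widetilde{\mathbf{H}}-\mathbf{H}_0\|_F^2$. For the right side, split $\widetilde{\mathbf{H}}-\mathbf{H}_0=(\widetilde{\mathbf{\Theta}}-\mathbf{\Theta}_0)+\mathcal{X}\otimes(\widetilde{\V{\beta}}-\V{\beta}_0)$. The $\mathbf{\Theta}$-part is controlled by nuclear/spectral duality, $|\langle\widetilde{\mathbf{\Theta}}-\mathbf{\Theta}_0,\mathbf{E}\rangle|\le\|\widetilde{\mathbf{\Theta}}-\mathbf{\Theta}_0\|_*\|\mathbf{E}\|_2\le 2R\|\mathbf{E}\|_2$, where $\|\mathbf{\Theta}_0\|_*\le\sqrt{r}\|\mathbf{\Theta}_0\|_F\le\sqrt{r}\,nK_\theta\le R$ by A3 once $R\asymp n$; the $\V{\beta}$-part is controlled by Cauchy--Schwarz over the $m$ coordinates, $|\sum_k(\widetilde{\beta}_k-\beta_{0k})\langle\mathbf{X}_k,\mathbf{E}\rangle|\le 2K_\beta(\sum_{k=1}^m\langle\mathbf{X}_k,\mathbf{E}\rangle^2)^{1/2}$, where each $\langle\mathbf{X}_k,\mathbf{E}\rangle$ is a sum of $n^2$ independent mean-zero terms with entries bounded via A5 and variances bounded by $\overline{\kappa}$, hence $O_p(n)$. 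Combining with the random-matrix bound $\|\mathbf{E}\|_2=O_p(\sqrt{n})$ gives $\tfrac{\underline{\kappa}}{2}\|\widetilde{\mathbf{H}}-\mathbf{H}_0\|_F^2=O_p(n^{3/2})$, so $n^{-2}\|\widetilde{\mathbf{H}}-\mathbf{H}_0\|_F^2=O_p(n^{-1/2})\to 0$, and therefore $n^{-1}\|\widetilde{\mathbf{P}}-\mathbf{P}\|_F\le\overline{\kappa}\,n^{-1}\|\widetilde{\mathbf{H}}-\mathbf{H}_0\|_F=O_p(n^{-1/4})\stackrel{p}{\longrightarrow}0$.

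The main obstacle is the spectral-norm bound $\|\mathbf{A}-\mathbf{P}\|_2=O_p(\sqrt{n})$: for bounded (e.g.\ binary) edges it follows from standard concentration for the operator norm of a random matrix with independent entries, but for heavier-tailed families (e.g.\ Poisson weights) it requires a truncation or matrix-Bernstein argument to tame the tails. A secondary point requiring care is keeping the fitted natural parameters $\widetilde{\eta}_{ij}$ inside the bounded interval so that $b''$ stays between $\underline{\kappa}$ and $\overline{\kappa}$; this is why A3--A4 must be imposed on the feasible region and not merely on the truth, since otherwise the strong-convexity and Lipschitz constants could degenerate.
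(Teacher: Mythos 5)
Your skeleton coincides with the paper's: the same basic inequality from optimality of $(\widetilde{\mathbf{\Theta}},\widetilde{\V{\beta}})$ reduces the problem to the empirical-process term $\langle\widetilde{\mathbf{H}}-\mathbf{H}_0,\mathbf{A}-\mathbf{P}\rangle$, which you and the paper both split into a $\mathbf{\Theta}$-part controlled by nuclear/spectral duality together with $\sigma_1(\mathbf{A}-\mathbf{P})=O_p(\sqrt{n})$ (the paper gets this from Latala's theorem plus Markov, which already covers the heavier-tailed Poisson case you flag as an obstacle) and a $\V{\beta}$-part controlled by Chebyshev/Cauchy--Schwarz. The left-hand side of your basic inequality, $\sum_{ij}D_b(\widetilde{\eta}_{ij},\eta_{0,ij})$, is exactly the paper's $n^2D_{KL}(f_{\mathbf{P}}\Vert f_{\widetilde{\mathbf{P}}})$.

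The genuine gap is in how you convert this divergence bound into a Frobenius bound on the means. Your route needs $b''\geq\underline{\kappa}>0$ at the \emph{fitted} natural parameters, hence needs the $\widetilde{\eta}_{ij}$ to stay in a fixed compact interval. But the feasible set of \eqref{opt2} is only $\{\|\mathbf{\Theta}\|_*\leq R\}$ with $R=\sqrt{r}nK_\theta$; this permits $\|\widetilde{\mathbf{\Theta}}\|_{\max}$ of order $n$, and for the logistic or Poisson links $b''$ then degenerates exponentially, destroying the rate: for the logistic model, $D_b(n,0)\asymp n$ while $(n-0)^2=n^2$, so the effective $\underline{\kappa}$ is of order $1/n$ and $n^{-2}\|\widetilde{\mathbf{H}}-\mathbf{H}_0\|_F^2$ no longer vanishes. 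You acknowledge this and propose imposing A3--A4 on the feasible region, but $\|\mathbf{\Theta}\|_{\max}\le K_\theta$ and $\mathrm{rank}(\mathbf{\Theta})\le r$ are not constraints of \eqref{opt2}; adding them changes the estimator (essentially back to the non-convex problem \eqref{opt1}), so your argument proves the theorem for a different $\widetilde{\mathbf{\Theta}}$ than the one in the statement. The same issue afflicts your opening reduction, since $b'=L^{-1}$ is not Lipschitz on an interval of width growing with $n$ for the Poisson link. The paper sidesteps curvature at the fitted point entirely: Lemma \ref{fro_tv_ineq} converts the KL bound into a Frobenius bound on the means via total variation, Pinsker's inequality, and a truncation argument that uses only uniform moment bounds, at the cost of a slower (but still vanishing) rate. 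To repair your proof you need either that device or a family-specific substitute (for Bernoulli edges, $D_{KL}\ge 2(\widetilde{p}_{ij}-p_{ij})^2$ follows from Pinsker directly), not strong convexity in the natural parameter.
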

Similarly, consistency of $\widehat{\mathbf{P}}$ can also be established.
\begin{corollary}\label{cor0}
Under assumptions A3-A5, we have, 
\[n^{-1}\|\widehat{\mathbf{P}}-\mathbf{P}\|_F\stackrel{p}{\longrightarrow}0,\]
where $\widehat{\mathbf{P}}=L^{-1}(\widehat{\mathbf{\Theta}}+\mathcal{X}\otimes\widehat{\V{\beta}})$, and $\widehat{\mathbf{\Theta}}$ and $\widehat{\V{\beta}}$ are obtained from (\ref{opt3}).
\end{corollary}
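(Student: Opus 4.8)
The plan is to reduce Corollary \ref{cor0} to Theorem \ref{thm1}, which I take as already established. The key observation is that the feasible set of (\ref{opt3}), namely $\{(\mathbf{\Theta},\V{\beta}):\|\mathbf{\Theta}\|_*\le R,\ \mathrm{rank}(\mathbf{\Theta})\le s\}$, is a subset of the feasible set of (\ref{opt2}), yet it still contains the true parameter $(\mathbf{\Theta},\V{\beta})$: under A3 we have $\mathrm{rank}(\mathbf{\Theta})\le r<s$ and $\|\mathbf{\Theta}\|_*\le\sqrt{r}\,\|\mathbf{\Theta}\|_F\le\sqrt{r}\,nK_\theta\le R$ for the same choice of $R$ used in Theorem \ref{thm1}. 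Because the truth remains feasible, the maximizer $(\widehat{\mathbf{\Theta}},\widehat{\V{\beta}})$ of (\ref{opt3}) satisfies the same optimality inequality $\ell_{\mathbf{A},\mathcal{X}}(\widehat{\mathbf{\Theta}},\widehat{\V{\beta}})\ge\ell_{\mathbf{A},\mathcal{X}}(\mathbf{\Theta},\V{\beta})$ that drives the argument for Theorem \ref{thm1}, and from there the two proofs run in parallel.

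First I would reproduce the basic inequality coming from this optimality relation. With $\widehat{\mathbf{M}}=\widehat{\mathbf{\Theta}}+\mathcal{X}\otimes\widehat{\V{\beta}}$, $\mathbf{M}=\mathbf{\Theta}+\mathcal{X}\otimes\V{\beta}$, and $\mathbb{E}[A_{ij}]=b'(\eta_{ij})=p_{ij}$, the inequality rearranges into the Bregman-divergence bound $\sum_{ij}D_b(\widehat{\eta}_{ij},\eta_{ij})\le\langle\widehat{\mathbf{M}}-\mathbf{M},\ \mathbf{A}-\mathbf{P}\rangle$, where $\langle\cdot,\cdot\rangle$ denotes the trace inner product. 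As in the proof of Theorem \ref{thm1}, a quadratic lower bound on the aggregated Bregman divergence gives $\sum_{ij}D_b(\widehat{\eta}_{ij},\eta_{ij})\ge c\,\|\widehat{\mathbf{P}}-\mathbf{P}\|_F^2$ for a constant $c>0$ (via Pinsker's inequality in the binary case), so it suffices to bound the right-hand side. Splitting $\widehat{\mathbf{M}}-\mathbf{M}=(\widehat{\mathbf{\Theta}}-\mathbf{\Theta})+\mathcal{X}\otimes(\widehat{\V{\beta}}-\V{\beta})$, I control the low-rank part by trace duality, $\langle\widehat{\mathbf{\Theta}}-\mathbf{\Theta},\mathbf{A}-\mathbf{P}\rangle\le\|\widehat{\mathbf{\Theta}}-\mathbf{\Theta}\|_*\,\|\mathbf{A}-\mathbf{P}\|_2\le 2R\,\|\mathbf{A}-\mathbf{P}\|_2$, and the covariate part by $\sum_k(\widehat{\beta}_k-\beta_k)\langle\mathbf{X}_k,\mathbf{A}-\mathbf{P}\rangle\le\|\widehat{\V{\beta}}-\V{\beta}\|_1\max_k|\langle\mathbf{X}_k,\mathbf{A}-\mathbf{P}\rangle|$, with $\|\widehat{\V{\beta}}-\V{\beta}\|_1$ bounded through A4. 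The two stochastic inputs, $\|\mathbf{A}-\mathbf{P}\|_2=O_p(\sqrt{n})$ from spectral concentration of the centered adjacency matrix and $\max_k|\langle\mathbf{X}_k,\mathbf{A}-\mathbf{P}\rangle|=O_p(n)$ from a Bernstein-type bound on a sum of $n^2$ independent bounded terms, are statements about $\mathbf{A}-\mathbf{P}$ alone. Combining gives $\|\widehat{\mathbf{P}}-\mathbf{P}\|_F^2=O_p(n^{3/2})$, hence $n^{-1}\|\widehat{\mathbf{P}}-\mathbf{P}\|_F=O_p(n^{-1/4})\to 0$, which reproduces the conclusion of Theorem \ref{thm1}.

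The step I would be most careful about is the single place where (\ref{opt3}) genuinely departs from (\ref{opt2}): I must check that imposing the extra constraint $\mathrm{rank}(\mathbf{\Theta})\le s$ neither expels the true parameter from the feasible set nor weakens any bound above. The first point is exactly the feasibility check $\mathrm{rank}(\mathbf{\Theta})\le r<s$ from A3, under the standing assumption $s>r$ with which (\ref{opt3}) was introduced. For the second, I would observe that every inequality in the reduction invokes at most the constraint $\|\widehat{\mathbf{\Theta}}\|_*\le R$ together with A3--A5; in particular the decisive trace-duality bound uses only $\|\widehat{\mathbf{\Theta}}-\mathbf{\Theta}\|_*\le 2R$, to which the rank cap $s$ contributes nothing. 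Since no ingredient relied on the absence of a rank constraint, the Theorem \ref{thm1} argument transfers verbatim and the corollary follows.
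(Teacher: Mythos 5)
Your reduction is essentially the paper's own (implicit) argument: the paper never writes a separate proof of Corollary \ref{cor0}, because its proof of Theorem \ref{thm1} is already phrased in terms of the estimator from \eqref{opt3} and uses only the two facts you isolate --- that the truth remains feasible (via $\mathrm{rank}(\mathbf{\Theta})\leq r<s$ and $\|\mathbf{\Theta}\|_*\leq\sqrt{r}nK_\theta$ from A3) and that every subsequent bound invokes only the nuclear-norm constraint on the feasible set, never the absence of a rank cap. One small correction to how you finish: the paper does not use a quadratic lower bound on the aggregated Bregman divergence (that would require a uniform bound on $\widehat{\eta}_{ij}$, which the nuclear-norm constraint alone does not provide for unbounded responses such as Poisson); it instead passes from $D_{KL}$ to Frobenius error via Lemma \ref{fro_tv_ineq}, whose exponent $\tfrac{\delta}{2+2\delta}<\tfrac{1}{2}$ means your claimed $O_p(n^{-1/4})$ rate is justified only in the bounded-response (e.g.\ logistic) case, although the consistency conclusion is unaffected.
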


The tail probabilities of both $n^{-1}\|\widetilde{\mathbf{P}}-\mathbf{P}\|_F$ and $n^{-1}\|\widehat{\mathbf{P}}-\mathbf{P}\|_F$ have a polynomially-decaying rate. We can obtain a better probability bound for some widely-used models such as logit models as stated in the following corollary.  
\begin{corollary}\label{cor1}
Under the assumptions of Theorem \ref{thm1},  if $A_{ij}$'s are uniformly bounded, then both $n^{-1}\|\widetilde{\mathbf{P}}-\mathbf{P}\|_F$ and $n^{-1}\|\widehat{\mathbf{P}}-\mathbf{P}\|_F$ have an exponentially-decaying tail probability. 
\end{corollary}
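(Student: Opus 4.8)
The plan is to retrace the proof of Theorem~\ref{thm1} and isolate its single stochastic ingredient---a tail bound on the centered adjacency matrix---then upgrade that bound from the moment-based (polynomial) control to an exponential one using the boundedness of the $A_{ij}$. Write $\mathbf{H}=\mathbf{\Theta}+\mathcal{X}\otimes\V{\beta}$ for the natural-parameter matrix, let $(\mathbf{\Theta}^*,\V{\beta}^*)$ be the true parameters (feasible by A3 and A4, so $\|\mathbf{\Theta}^*\|_*\le R$), and set $\mathbf{E}=\mathbf{A}-\mathbf{P}$ with entries $\varepsilon_{ij}=A_{ij}-p_{ij}$, which are mean zero, independent across the free entries, and now uniformly bounded. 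Since $(\widetilde{\mathbf{\Theta}},\widetilde{\V{\beta}})$ maximizes $\ell_{\mathbf{A},\mathcal{X}}$ over a set containing the truth, the basic inequality $\ell_{\mathbf{A},\mathcal{X}}(\widetilde{\mathbf{\Theta}},\widetilde{\V{\beta}})\ge\ell_{\mathbf{A},\mathcal{X}}(\mathbf{\Theta}^*,\V{\beta}^*)$, combined with the Bregman-divergence form of the log-likelihood gap and the curvature of $b$ on the compact range of the natural parameter guaranteed by A3--A5, gives exactly as in Theorem~\ref{thm1} an inequality of the form
\[
\tfrac{\mu}{2}\,\|\widetilde{\mathbf{H}}-\mathbf{H}^*\|_F^2 \;\le\; \langle\widetilde{\mathbf{H}}-\mathbf{H}^*,\mathbf{E}\rangle .
\]

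I would then dualize the right-hand side termwise. Trace duality bounds the low-rank part by $\langle\widetilde{\mathbf{\Theta}}-\mathbf{\Theta}^*,\mathbf{E}\rangle\le\|\widetilde{\mathbf{\Theta}}-\mathbf{\Theta}^*\|_*\,\|\mathbf{E}\|_2\le 2R\,\|\mathbf{E}\|_2$, while Cauchy--Schwarz bounds the covariate part by $\langle\mathcal{X}\otimes(\widetilde{\V{\beta}}-\V{\beta}^*),\mathbf{E}\rangle\le 2K_\beta\big(\sum_{k=1}^m\langle\mathbf{X}_k,\mathbf{E}\rangle^2\big)^{1/2}$, where $\langle\mathbf{X}_k,\mathbf{E}\rangle=\mathrm{tr}(\mathbf{X}_k^\top\mathbf{E})$. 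Because $L^{-1}=b'$ is Lipschitz on the relevant range, $\|\widetilde{\mathbf{P}}-\mathbf{P}\|_F\lesssim\|\widetilde{\mathbf{H}}-\mathbf{H}^*\|_F$, so after taking the natural feasible radius $R$ of order $n$ (forced by $\mathrm{rank}(\mathbf{\Theta}^*)\le r$ and $\|\mathbf{\Theta}^*\|_{\max}\le K_\theta$) the event $\{n^{-1}\|\widetilde{\mathbf{P}}-\mathbf{P}\|_F>\epsilon\}$ is contained in $\{\|\mathbf{E}\|_2\gtrsim\epsilon^2 n\}\cup\{\max_k|\langle\mathbf{X}_k,\mathbf{E}\rangle|\gtrsim\epsilon^2 n^2\}$.

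The key step is to bound these two events exponentially, where Theorem~\ref{thm1} used only Markov's inequality. For $\|\mathbf{E}\|_2$ I would apply the matrix Bernstein inequality to the decomposition of $\mathbf{E}$ into independent, uniformly bounded summands---via the Hermitian dilation in the directed case, and via the symmetric rank-two summands $\varepsilon_{ij}(\mathbf{e}_i\mathbf{e}_j^\top+\mathbf{e}_j\mathbf{e}_i^\top)$ in the undirected case so as to respect the constraint $A_{ij}=A_{ji}$. With matrix-variance proxy of order $n$ this gives $\mathbb{P}(\|\mathbf{E}\|_2\ge t)\le 2n\exp(-ct^2/(n+t))$, so the threshold $t\asymp n$ yields a bound of order $n\,e^{-c'n}$. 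For each fixed $k$, $\langle\mathbf{X}_k,\mathbf{E}\rangle=\sum_{ij}x_{ijk}\varepsilon_{ij}$ is a sum of $n^2$ independent bounded terms ($|x_{ijk}|\le K_x$ by A5), so Hoeffding's inequality gives $\mathbb{P}(|\langle\mathbf{X}_k,\mathbf{E}\rangle|\ge t)\le 2\exp(-ct^2/n^2)$, and the threshold $t\asymp n^2$ yields $e^{-c'n^2}$; a union bound over $k=1,\dots,m$ is harmless. Combining the two pieces shows $\mathbb{P}(n^{-1}\|\widetilde{\mathbf{P}}-\mathbf{P}\|_F>\epsilon)$ decays exponentially. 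The estimator $\widehat{\mathbf{P}}$ from \eqref{opt3} is handled verbatim, since its feasible set is contained in $\{\|\mathbf{\Theta}\|_*\le R\}$ yet still contains the rank-$r$ truth, so the same basic inequality and concentration apply.

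The main obstacle I expect is not the concentration itself but keeping the curvature constant $\mu$ (a lower bound on $b''$) and the Lipschitz constant of $b'$ honest, since both are controlled only on the range of the natural parameter. One must argue that the maximizer $\widetilde{\mathbf{H}}$ stays in a set where $b''$ is bounded above and away from zero, or---as for the logit link, where $b'$ is globally $\tfrac14$-Lipschitz---invoke global bounds. For links such as the Poisson log-link this uses the entrywise control from A3--A5 together with a short argument that the optimum does not leave the compact range; this is where ``uniformly bounded $A_{ij}$'' does double duty, supplying both bounded noise for the concentration step and well-conditioned nonlinearities for the deterministic step.
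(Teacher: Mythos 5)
Your proposal is correct and follows the same overall strategy as the paper: reuse the basic inequality from the proof of Theorem \ref{thm1}, which reduces everything to tail bounds on the two stochastic terms \eqref{ineq1term1} and \eqref{ineq1term2}, and then upgrade those bounds from polynomial to exponential using the boundedness of the $A_{ij}$. The differences are in the choice of tools. For the covariate term you use Hoeffding's inequality, exactly as the paper does. For the spectral term the paper combines Latala's bound on $\mathbb{E}[\sigma_1(\mathbf{A}-\mathbf{P})]$ (already in hand from Theorem \ref{thm1}) with Talagrand's concentration inequality for the deviation of $\sigma_1(\mathbf{A}-\mathbf{P})$ about its mean, whereas you invoke matrix Bernstein directly; both require bounded independent entries and both give a tail of order $e^{-cn}$ at the threshold $t\asymp n$, the only cost of Bernstein being the harmless dimensional prefactor $2n$. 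The other divergence is in the deterministic reduction: the paper stays with the KL divergence and passes to the Frobenius norm of $\widetilde{\mathbf{P}}-\mathbf{P}$ via Lemma \ref{fro_tv_ineq} (a Pinsker-type bound needing only bounded moments), while you substitute a strong-convexity argument requiring $b''$ bounded below and $b'$ Lipschitz on the range of the natural parameter. You correctly flag this as the delicate point; it is worth noting that the paper's route through Lemma \ref{fro_tv_ineq} sidesteps the lower bound on $b''$ entirely (that hypothesis only appears later, in Corollary \ref{col2}), so if you want the corollary at the stated level of generality you should keep the paper's Frobenius--KL lemma rather than the curvature bound, which for some exponential families would be an extra assumption.
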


Beyond $\widehat{\mathbf{P}}$, asymptotic properties of $\widehat{\M{\Theta}}$ and $\widehat{\V{\beta}}$ are also often of interest.  If they are identifiable, the following corollary gives consistency for the parameters.  
\begin{corollary}\label{col2}
If assumptions A1-A5 hold, $\inf_{ij}\mathrm{Var}(A_{ij})>0$, and there exists $0<\delta<1$ such that
\begin{align}
\sup_{\V{\beta}}\frac{\sum^{r+s}_{i=1}\sigma_i^2(\mathcal{X}\otimes\V{\beta})}{\sum^n_{i=1}\sigma_i^2(\mathcal{X}\otimes\V{\beta})}\leq \delta < 1,  \label{eff_rank}
\end{align}
then 
\begin{align*}
n^{-1}\|\widehat{\mathbf{\Theta}}-\mathbf{\Theta}\|_F  & \stackrel{P}{\longrightarrow} 0 \\
\|\widehat{\V{\beta}}-\V{\beta}\|_F &  \stackrel{P}{\longrightarrow} 0.
\end{align*}
\end{corollary}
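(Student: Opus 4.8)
The plan is to deduce parameter consistency from the already-established consistency of the mean matrix (Corollary \ref{cor0}) in three stages: first transfer the convergence from the $\M{P}$-scale to the linear-predictor scale, then separate the low-rank part $\M{\Theta}$ from the covariate part using the rank budget together with the effective-rank condition \eqref{eff_rank}, and finally invert the covariate map to recover $\V{\beta}$. Throughout, write $\Delta_{\Theta}=\widehat{\M{\Theta}}-\M{\Theta}$ and $\Delta_{\beta}=\widehat{\V{\beta}}-\V{\beta}$, and let $\boldsymbol{\eta}=\M{\Theta}+\mathcal{X}\otimes\V{\beta}$ denote the matrix of linear predictors.

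First I would pass from $\M{P}$ to $\boldsymbol{\eta}$. With $g=L^{-1}=b'$, the exponential-family identities give $g'=b''=\mathrm{Var}$, so the assumption $\inf_{ij}\mathrm{Var}(A_{ij})>0$, combined with the boundedness of the parameters (A3--A5 for the truth and the feasible-set bounds for the estimators), forces $g'\ge c>0$ on the compact interval containing all relevant $\eta_{ij}$ and $\widehat{\eta}_{ij}$. Applying the mean value theorem entrywise yields $|\widehat{P}_{ij}-P_{ij}|\ge c\,|\widehat{\eta}_{ij}-\eta_{ij}|$, and summing squares gives $\|\widehat{\M{P}}-\M{P}\|_F\ge c\,\|\widehat{\boldsymbol{\eta}}-\boldsymbol{\eta}\|_F$. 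Hence Corollary \ref{cor0} delivers
\[
n^{-1}\big\|\Delta_{\Theta}+\mathcal{X}\otimes\Delta_{\beta}\big\|_F
= n^{-1}\|\widehat{\boldsymbol{\eta}}-\boldsymbol{\eta}\|_F
\;\stackrel{P}{\longrightarrow}\;0 .
\]

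The central step is to split this limit into its two pieces. The true $\M{\Theta}$ has rank at most $r$ (A3) and the estimator $\widehat{\M{\Theta}}$ from \eqref{opt3} has rank at most $s$, so $\Delta_{\Theta}$ has rank at most $r+s$ by subadditivity of rank. Setting $M=\mathcal{X}\otimes\Delta_{\beta}$ and letting $M_{r+s}$ be its best rank-$(r+s)$ approximation, the Eckart--Young theorem gives, for the rank-$(r+s)$ matrix $-\Delta_{\Theta}$,
\[
\|\Delta_{\Theta}+M\|_F=\|M-(-\Delta_{\Theta})\|_F\ge\|M-M_{r+s}\|_F
=\Big(\sum_{i>r+s}\sigma_i^2(M)\Big)^{1/2}.
\]
The effective-rank condition \eqref{eff_rank} says precisely that $\sum_{i\le r+s}\sigma_i^2(M)\le\delta\|M\|_F^2$, so $\sum_{i>r+s}\sigma_i^2(M)\ge(1-\delta)\|M\|_F^2$ and therefore $\|\Delta_{\Theta}+M\|_F\ge\sqrt{1-\delta}\,\|\mathcal{X}\otimes\Delta_{\beta}\|_F$. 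Combined with the displayed limit this forces $n^{-1}\|\mathcal{X}\otimes\Delta_{\beta}\|_F\to0$ in probability, and then the triangle inequality gives $n^{-1}\|\Delta_{\Theta}\|_F\to0$, which is the first conclusion.

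It remains to invert the covariate map. By A2 the vectorized covariates $\mathrm{vec}(\mathbf{X}_1),\dots,\mathrm{vec}(\mathbf{X}_m)$ are linearly independent, so the $m\times m$ Gram matrix with entries $n^{-2}\,\mathrm{tr}(\mathbf{X}_k^\top\mathbf{X}_l)$ is positive definite and, being bounded under A5, has smallest eigenvalue $\lambda_{\min}$ bounded away from zero; hence $\|\Delta_{\beta}\|_2^2\le\lambda_{\min}^{-1}\,n^{-2}\|\mathcal{X}\otimes\Delta_{\beta}\|_F^2\to0$, giving $\|\widehat{\V{\beta}}-\V{\beta}\|_F\to0$. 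The main obstacle is the separation step: one must combine the hard rank budget on $\Delta_{\Theta}$ with Eckart--Young and condition \eqref{eff_rank} in the correct direction, since without \eqref{eff_rank} a nonzero $\Delta_{\beta}$ could render $\mathcal{X}\otimes\Delta_{\beta}$ nearly rank-$(r+s)$ and thus be absorbed into $\Delta_{\Theta}$, destroying identifiability. A secondary subtlety is ensuring that the scaled Gram matrix stays nondegenerate as $n\to\infty$, i.e.\ that the linear independence in A2 holds robustly in the limit rather than merely for each fixed $n$.
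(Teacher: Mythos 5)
Your proof is correct, and its central step takes a genuinely different route from the paper's. Both arguments begin identically (mean value theorem plus $b''=\mathrm{Var}\ge c>0$ to transfer Corollary \ref{cor0} from $\M{P}$ to the linear predictor) and end identically (positive definiteness of the normalized Gram matrix of $\mathrm{vec}(\mathbf{X}_k)$ to recover $\V{\beta}$ from $\|\mathcal{X}\otimes(\widehat{\V{\beta}}-\V{\beta})\|_F$). Where you diverge is the separation of $\Delta_\Theta$ from $M=\mathcal{X}\otimes\Delta_\beta$: you view $-\Delta_\Theta$ as a rank-$(r+s)$ competitor in the Eckart--Young problem for $M$, getting $\|\Delta_\Theta+M\|_F\ge(\sum_{i>r+s}\sigma_i^2(M))^{1/2}\ge\sqrt{1-\delta}\,\|M\|_F$ and then recovering $\|\Delta_\Theta\|_F$ by the triangle inequality; the paper instead bounds the cosine of the angle between $\Delta_\Theta$ and $M$ by $\sqrt{\delta}$ via von Neumann's trace inequality (only the first $r+s$ singular values of $M$ pair with the nonzero singular values of $\Delta_\Theta$) plus Cauchy--Schwarz, and then expands $\|\Delta_\Theta+M\|_F^2\ge(1-\sqrt{\delta})(\|\Delta_\Theta\|_F^2+\|M\|_F^2)$ to kill both terms at once. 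The two exploit exactly the same structural facts --- the rank budget $\mathrm{rank}(\Delta_\Theta)\le r+s$ and condition \eqref{eff_rank} applied to $\Delta_\beta$ --- and yield the same conclusion with slightly different constants; yours is arguably cleaner and avoids the paper's minor inconsistency of writing $2r$ where $r+s$ is meant. Your closing caveat about the smallest eigenvalue of the scaled Gram matrix needing to stay bounded away from zero uniformly in $n$ is a real gap that the paper's own proof also leaves implicit, so flagging it is appropriate rather than a defect of your argument.
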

Note that we can drop the supermum in the condition \eqref{eff_rank} as $\frac{\sum^{r+s}_{i=1}\sigma_i^2(\mathcal{X})}{\sum^n_{i=1}\sigma_i^2(\mathcal{X})}\leq\delta<1$ if $\V{\beta}$ is univariate and correspondingly $\mathcal{X}$ is a matrix.  

The convex relaxation in \eqref{opt2} changes the feasible set, and in the new parameter space, $(\V{\beta}$ and $\mathbf{\Theta})$ may no longer be identifiable. Therefore consistency of $\widetilde{\V{\beta}}$ and $\widetilde{\mathbf{\Theta}}$ is not guaranteed.


A case of practical interest is when $\mathbf{\Theta}$ is only approximately rather than exactly low rank (i.e., has a few large leading eigenvalues and the other eigenvalues are relatively small but not necessarily 0).   We can then show the bias of $\widetilde{\mathbf{P}}$ and $\widehat{\mathbf{P}}$ caused by model misspecification can be bounded as follows.  
\begin{theorem}\label{thm2}
Under the assumptions of Theorem \ref{thm1}, except that $\mathrm{rank}(\mathbf{\Theta})>r$, we have
\[\frac{\|\widetilde{\mathbf{P}}-\mathbf{P}\|_F^2}{\sum^n_{k=r+1}\sigma_k(\mathbf{\Theta})}=O_P(1),\]
and
\[\frac{\|\widehat{\mathbf{P}}-\mathbf{P}\|_F^2}{\sum^n_{k=r+1}\sigma_k(\mathbf{\Theta})}=O_P(1).\]
\end{theorem}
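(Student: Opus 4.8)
\noindent\emph{Proof proposal.} The plan is to reuse the machinery behind Theorem~\ref{thm1}, the only genuinely new element being that the true effects matrix is no longer feasible and must be replaced by a feasible surrogate. Let $\mathbf{\Theta}=\sum_k \sigma_k\mathbf{u}_k\mathbf{v}_k^\top$ be the SVD of the (now rank-$>r$) truth and let $\mathbf{\Theta}_r=\sum_{k=1}^r\sigma_k\mathbf{u}_k\mathbf{v}_k^\top$ be its best rank-$r$ approximation, which is feasible for \eqref{opt2} because $\mathrm{rank}(\mathbf{\Theta}_r)\le r$ and $\|\mathbf{\Theta}_r\|_*\le R$ for the same $R$ as in Theorem~\ref{thm1} (the entrywise bound in A3 still forces $\|\mathbf{\Theta}_r\|_*\le\sqrt{r}\,\|\mathbf{\Theta}\|_F\le\sqrt{r}\,nK_\theta$). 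Write $\boldsymbol{\eta}=\mathbf{\Theta}+\mathcal{X}\otimes\V{\beta}$, $\boldsymbol{\eta}_r=\mathbf{\Theta}_r+\mathcal{X}\otimes\V{\beta}$ and $\widetilde{\boldsymbol{\eta}}=\widetilde{\mathbf{\Theta}}+\mathcal{X}\otimes\widetilde{\V{\beta}}$, so that $\mathbf{P}=b'(\boldsymbol{\eta})$ and $\widetilde{\mathbf{P}}=b'(\widetilde{\boldsymbol{\eta}})$ entrywise.

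First I would record the basic inequality. Since $(\mathbf{\Theta}_r,\V{\beta})$ is feasible and $(\widetilde{\mathbf{\Theta}},\widetilde{\V{\beta}})$ maximises the log-likelihood over the feasible set, $\ell_{\mathbf{A},\mathcal{X}}(\widetilde{\mathbf{\Theta}},\widetilde{\V{\beta}})\ge \ell_{\mathbf{A},\mathcal{X}}(\mathbf{\Theta}_r,\V{\beta})$. Expanding the canonical form \eqref{lrem:canonical_form}, substituting $A_{ij}=p_{ij}+\varepsilon_{ij}$ with $\mathbb{E}\,\varepsilon_{ij}=0$, and using $b'(\eta_{ij})=p_{ij}$, this rearranges into the Bregman inequality
\[
D(\widetilde{\boldsymbol{\eta}},\boldsymbol{\eta})\ \le\ D(\boldsymbol{\eta}_r,\boldsymbol{\eta})\ +\ \mathrm{tr}\!\big((\widetilde{\boldsymbol{\eta}}-\boldsymbol{\eta}_r)^\top(\mathbf{A}-\mathbf{P})\big),
\]
where $D(\boldsymbol{\eta}',\boldsymbol{\eta})=\sum_{ij}\{b(\eta'_{ij})-b(\eta_{ij})-b'(\eta_{ij})(\eta'_{ij}-\eta_{ij})\}$ is the Bregman divergence of $b$. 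This is exactly the inequality used for Theorem~\ref{thm1}, except for the additional deterministic term $D(\boldsymbol{\eta}_r,\boldsymbol{\eta})$ that quantifies the misspecification.

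Next I would control the three quantities separately. For the left-hand side, a curvature lower bound $b''\ge\gamma>0$ on the relevant range gives $D(\widetilde{\boldsymbol{\eta}},\boldsymbol{\eta})\ge\tfrac{\gamma}{2}\|\widetilde{\boldsymbol{\eta}}-\boldsymbol{\eta}\|_F^2$, which, combined with the Lipschitz bound $|b'(\widetilde\eta_{ij})-b'(\eta_{ij})|\le K|\widetilde\eta_{ij}-\eta_{ij}|$, yields $D(\widetilde{\boldsymbol{\eta}},\boldsymbol{\eta})\ge \tfrac{\gamma}{2K^2}\|\widetilde{\mathbf{P}}-\mathbf{P}\|_F^2$, the target quantity. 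For the stochastic term I would split $\widetilde{\boldsymbol{\eta}}-\boldsymbol{\eta}_r=(\widetilde{\mathbf{\Theta}}-\mathbf{\Theta}_r)+\mathcal{X}\otimes(\widetilde{\V{\beta}}-\V{\beta})$ and bound the effects part by trace duality, $\mathrm{tr}((\widetilde{\mathbf{\Theta}}-\mathbf{\Theta}_r)^\top(\mathbf{A}-\mathbf{P}))\le\|\widetilde{\mathbf{\Theta}}-\mathbf{\Theta}_r\|_*\,\|\mathbf{A}-\mathbf{P}\|_2\le 2R\,\|\mathbf{A}-\mathbf{P}\|_2$, and the covariate part by Cauchy--Schwarz with A4--A5; this is verbatim the stochastic control from Theorem~\ref{thm1}, where $\|\mathbf{A}-\mathbf{P}\|_2=O_P(\sqrt{n})$ by a spectral-norm concentration bound for the mean-zero, independent-entry matrix $\mathbf{A}-\mathbf{P}$. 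For the bias term, $\boldsymbol{\eta}_r-\boldsymbol{\eta}=\mathbf{\Theta}_r-\mathbf{\Theta}$, so an upper curvature bound gives
\[
D(\boldsymbol{\eta}_r,\boldsymbol{\eta})\ \le\ \tfrac{K}{2}\|\mathbf{\Theta}_r-\mathbf{\Theta}\|_F^2\ =\ \tfrac{K}{2}\sum_{k=r+1}^n\sigma_k^2(\mathbf{\Theta})\ \le\ \tfrac{K}{2}\,\sigma_{r+1}(\mathbf{\Theta})\sum_{k=r+1}^n\sigma_k(\mathbf{\Theta}),
\]
which is $O\big(\sum_{k=r+1}^n\sigma_k(\mathbf{\Theta})\big)$ under the approximate-low-rank premise that the discarded singular values stay bounded.

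Assembling the pieces gives $\|\widetilde{\mathbf{P}}-\mathbf{P}\|_F^2=O_P\big(\sum_{k=r+1}^n\sigma_k(\mathbf{\Theta})+n^{3/2}\big)$, and in the misspecified regime the singular-value tail is the governing contribution, so dividing through by $\sum_{k=r+1}^n\sigma_k(\mathbf{\Theta})$ yields the stated $O_P(1)$. The claim for $\widehat{\mathbf{P}}$ follows from the identical argument with the rank-$s$ truncated projection $\mathcal{P}_{(R,s)}$ in place of the nuclear-norm projection, reusing the control already established for Corollary~\ref{cor0}. The step I expect to be the main obstacle is the curvature lower bound on the left-hand side: the feasible set constrains only $\|\widetilde{\mathbf{\Theta}}\|_*$ and not its entries, so the fitted natural parameters need not be uniformly bounded and $b''(\widetilde\eta_{ij})$ may degenerate (e.g.\ for the logit link as $|\widetilde\eta_{ij}|\to\infty$); legitimising a uniform $\gamma>0$ therefore requires first showing the optimiser's natural parameters concentrate in a region of non-vanishing curvature, or, in the bounded-$A_{ij}$ setting of Corollary~\ref{cor1}, exploiting boundedness directly. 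A secondary subtlety is the passage from the squared tail $\sum_{k>r}\sigma_k^2$ to the nuclear tail $\sum_{k>r}\sigma_k$, which is precisely where the smallness of the trailing singular values is used.
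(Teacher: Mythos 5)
Your overall architecture matches the paper's: replace the infeasible truth by a feasible low-rank surrogate, use optimality of the estimator to get a basic inequality, and split the excess into a stochastic term (handled exactly as in Theorem \ref{thm1}) plus a deterministic misspecification term controlled by the discarded singular values. Your ``Bregman inequality'' is the same object as the paper's KL decomposition, since for the canonical family \eqref{lrem:canonical_form} the KL divergence between natural parameters is precisely the Bregman divergence of $b$.

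There are, however, two genuine gaps. First, the step you yourself flag as the main obstacle --- lower-bounding the Bregman divergence by $\|\widetilde{\mathbf{P}}-\mathbf{P}\|_F^2$ via a uniform curvature bound $b''\ge\gamma>0$ --- cannot be justified from the stated assumptions: the feasible set constrains only $\|\widetilde{\mathbf{\Theta}}\|_*$, so the fitted natural parameters can be of order $R$ entrywise and $b''(\widetilde{\eta}_{ij})$ can degenerate (e.g., for the logit link). The paper never needs this bound; it passes from KL divergence to Frobenius distance through Lemma \ref{fro_tv_ineq}, which goes KL $\to$ total variation (Pinsker) $\to$ mean difference using only the uniform moment bounds $M_\xi$ implied by A3--A5. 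Without that lemma, or an argument that the optimizer's entries concentrate in a compact set (which you do not supply), the chain from your basic inequality to the target quantity is broken. Second, your final assembly asserts that $\sum_{k>r}\sigma_k(\mathbf{\Theta})$ dominates the $O_P(n^{3/2})$ stochastic contribution ``in the misspecified regime''; but A3 forces $\sum_{k>r}\sigma_k(\mathbf{\Theta})\le\sqrt{n}\,\|\mathbf{\Theta}\|_F\le n^{3/2}K_\theta$, so the tail can never strictly dominate that term and the division does not yield $O_P(1)$ from your bound without further argument. A smaller point: your bias bound $D(\boldsymbol{\eta}_r,\boldsymbol{\eta})\le\tfrac{K}{2}\sigma_{r+1}(\mathbf{\Theta})\sum_{k>r}\sigma_k(\mathbf{\Theta})$ quietly adds the assumption that $\sigma_{r+1}(\mathbf{\Theta})$ is bounded, which is not among the theorem's hypotheses; the paper instead uses the first-order bounds $\sum_{ij}\big|b(\widehat{\theta}^*_{ij}+\mathbf{x}_{ij}^\top\boldsymbol{\beta})-b(\theta_{ij}+\mathbf{x}_{ij}^\top\boldsymbol{\beta})\big|\le nK_p\|\widehat{\mathbf{\Theta}}^*-\mathbf{\Theta}\|_*$ and $\mathrm{tr}(\mathbf{P}^\top(\mathbf{\Theta}-\widehat{\mathbf{\Theta}}^*))\le n\sum_{k>r}\sigma_k(\mathbf{\Theta})$, which are linear in the nuclear tail and require no such condition.
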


This result suggests that our proposed estimates enjoy robustness under model misspecification if the eigenvalues following the first $r$ are small.  This holds even if  $r$ grows with $n$  as long as $r=o(n)$. As an application of Theorem \ref{thm2}, we present the error bound for the low rank effects model for binary networks as an example.
\begin{example}[Bias of the low rank effects logistic model]
For logistic models, $b(\eta)=\log(1+e^\eta)$. Thus, by \eqref{lrem:misspecification} in the proof of Theorem \ref{thm2}, we have
\begin{align*}
\sum_{ij}  & \big(b(\widehat{\theta}_{ij}+\mathbf{x}_{ij}^\top\V{\beta}) - b(\theta_{ij}+\mathbf{x}_{ij}^\top\V{\beta})\big) \\
& = \sum_{ij}\log\bigg(\frac{1+e^{\widehat{\theta}_{ij}+\mathbf{x}_{ij}^\top\beta}}{1+e^{\theta_{ij}+\mathbf{x}_{ij}^\top\beta}}\bigg) \leq n\sum^n_{k=r+1}\sigma_k(\mathbf{\Theta})
\end{align*}
and therefore 
\[\mathbb{P}\bigg(n^{-1}\|\widehat{\mathbf{P}}-\mathbf{P}\|_F\leq\bigg(2n^{-1}\sum^n_{k=r+1}\sigma_k(\mathbf{\Theta})\bigg)^{\frac{1}{2}}\bigg)\rightarrow 1.\]
\end{example}

%

\section{Results on synthetic networks}\label{sec_simulation}
In this section, we present numerical results on simulated data to demonstrate the finite sample performance of the proposed low rank effects model and compare to benchmark methods.    For the sake of computational efficiency, we focus on the estimate given by \eqref{opt3}.

We consider a generative model similar to \eqref{proj}, with the mean function given by
\begin{align}
L(\mathbf{P})=\mathbf{Z}\mathbf{Z}^\top + \alpha\mathbf{11}^\top + \mathcal{X}\otimes\boldsymbol{\beta}, \label{gen_model}
\end{align}
where $\mathbf{Z}\sim [N(0,1)]_{n\times (r-1)}$ with independent entries. For the feature tensor $\mathcal{X}=[\mathbf{X}_1,\mathbf{X}_2]_{n\times n\times 2}$, we first generate $\widetilde{\mathbf{X}}\sim [N(0,1)]_{n\times n}$ with independent entries and then compute $\mathbf{X}=\mathbf{UV}^\top$, where $\mathbf{U}$ and $\mathbf{V}$ are obtained from $\widetilde{\mathbf{X}}\stackrel{SVD}{=}\mathbf{UDV}^\top$. Therefore, all singular values of both $\mathbf{X}_1$ and $\mathbf{X}_2$ are equal to 1. Specifically,  so they are full rank. We set $n=200$ and $r=2$,  and $\boldsymbol{\beta}=(c,-c)$.  Given the mean function $L$ and $\mathcal{X}$ and $\mathbf{Z}$, we generate conditionally independent edges.  We vary the parameters $\alpha$ and $c$ to investigate the density of the network and the relative importance of low rank effects and covariates. 

As benchmarks, we fit the classical GLMs and latent models, with details given below.  The estimation for latent models is based on 500 burn-in and 10,000 MCMC iterations in each setting.  Following the evaluation method for link prediction  in \cite{Zhao2013}, all tuning parameters for the low rank effects model and latent models are selected with subsampling validation.  Specifically, we create training data networks by setting randomly selected 20\% of all edges to 0, and calculate the predictive area under the ROC curve (AUC), which is defined as
\[ \mathrm{AUC}(\mathbf{A},\widehat{\mathbf{P}}) = \frac{\sum_{(i,j),(i',j')\in\mathcal{I}} 1(A_{ij}=0, A_{i'j'} > 0, \widehat{p}_{ij}<\widehat{p}_{i'j'})}{\sum_{(i,j),(i',j')\in\mathcal{I}} 1(A_{ij}=0,A_{i'j'}>0)},\]
where $\mathcal{I}$ is the index set of the ``held-out'' edges. With the selected tuning parameter, we fit the model to the entire network to obtain $\widehat{\mathbf{P}}$.   We then generate test networks $\mathbf{A}_{test}$ and compute $\mathrm{AUC}(\mathbf{A}_{test},\widehat{\mathbf{P}})$. 
In simulation studies, we have also computed the relative mean squared error for $\M{P}$, defined as $\mathrm{RMSE}(\widehat{\mathbf{P}})=\|\widehat{\mathbf{P}}-\mathbf{P}\|_F / \|\mathbf{P}\|_F$.


\subsection{Binary networks}\label{lrem:sim:binary}
By setting $L(p)=\mathrm{logit}(p)$ in model \eqref{gen_model}, we generated directed binary networks, with edges conditional on parameters generated independent Bernoulli random variables.  For each training network, we generated 10 test networks using the same parameters and covariates to evaluate the predictive AUC. For each setting, we also computed the RMSE.  The logistic regression model and the latent factor model \citep{Hoff2008} were used as benchmarks. 

Average results over 100 replications are shown in Figures \ref{fig_logit_auc} and \ref{fig_logit_mse}.  Although the low rank effects model (LREM) has a somewhat larger parameter RMSE when the networks are sparse (small values of $\alpha$), it outperforms both logistic regression and the latent factor model in terms of predictive AUC.    When the value of $c$ is large, most of the signal comes from the covariates rather than the low rank effects, and thus LREM behaves similarly to  logistic regression.  However, when the value of $c$ is small, LREM outperforms logistic regression, especially on predictive AUC, by properly combining the information from both the network and the node covariates.     We also observed that our algorithm produced much more numerically stable results than the latent factor model, with vastly lower computational cost. For example, in this simulation, for each setting our algorithm can converge in a few minutes on one single laptop.

\begin{figure}
\begin{center}
\includegraphics[width=0.8\textwidth]{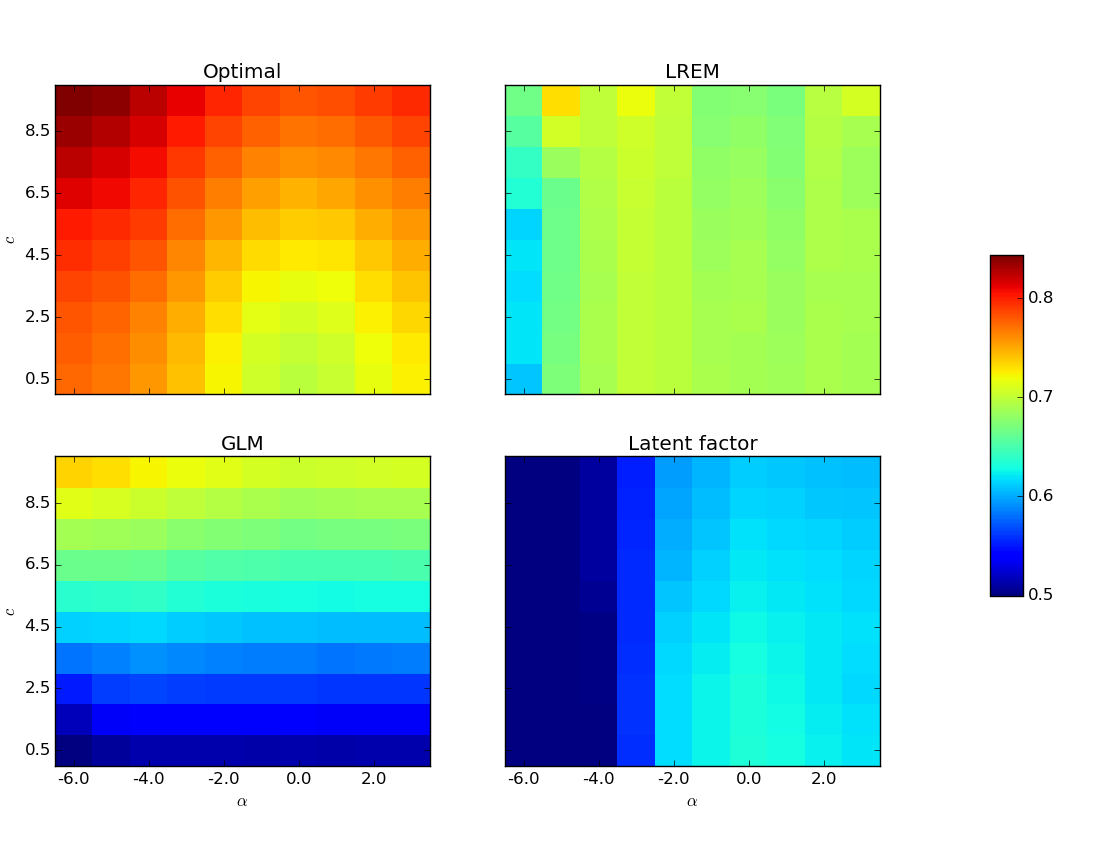}
\caption{Predictive AUC for binary networks with various $\alpha$ and $c$. ``Optimal'' is the AUC based on the true mean matrix $\mathbf{P}$.}\label{fig_logit_auc}
\end{center}

\begin{center}
\includegraphics[width=0.8\textwidth]{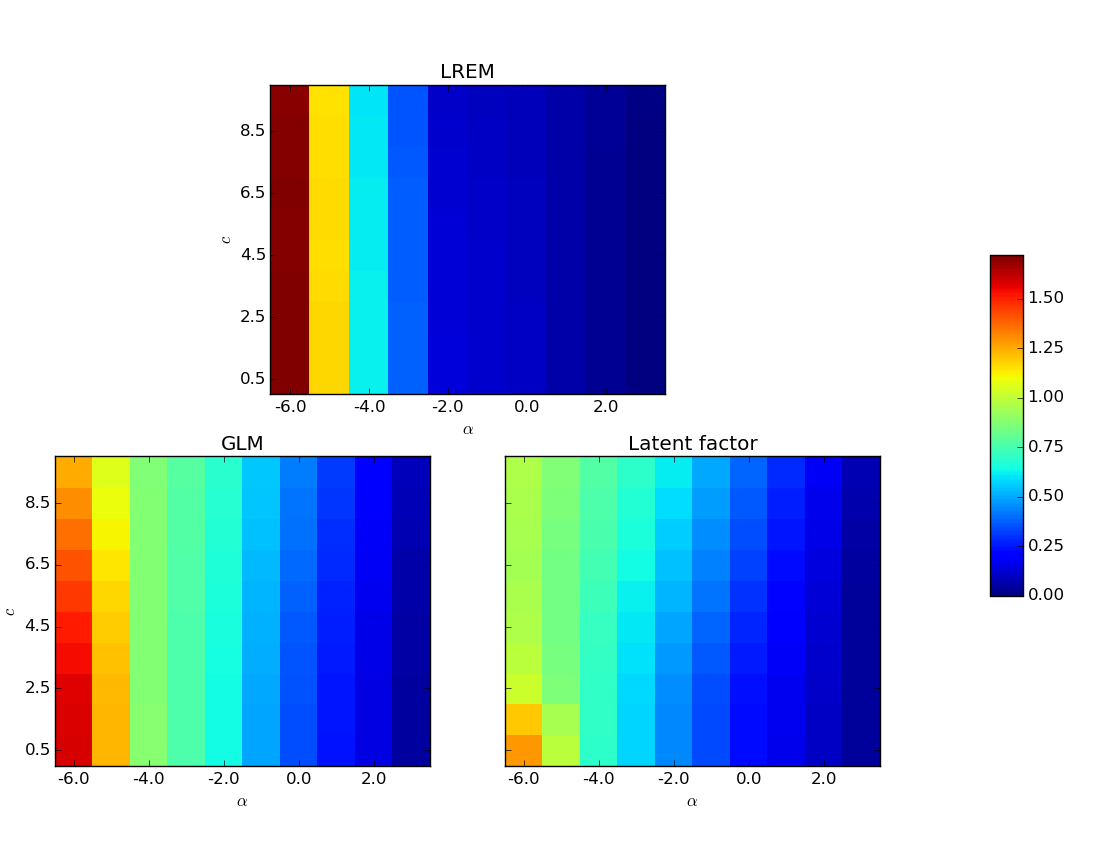}
\caption{RMSE for $\M{P}$ for binary networks with various $\alpha$ and $c$. }\label{fig_logit_mse}
\end{center}
\end{figure}

\subsection{Integer-weighted networks}
An important advantage of the proposed low rank effects model is that it extends trivially to weighted networks. Using the link function $L(p)=\log p$, we generated networks based on \eqref{gen_model} with edges conditionally independent Poisson random variables.  All other aspects of the simulation remain the same.  We consider the Poisson model and the fixed rank nomination model \citep{Hoff2012} for integer-weighted networks as benchmarks.  Note the fixed rank nomination model was originally developed for networks with partial rank ordering relationships, but since integer weights can be viewed as the strength of relationships in this model, it is a natural benchmark for comparison.  Since the AUC cannot be readily calculated on non-binary networks, we measure the performance based on ``classifying'' pairs of nodes that are connected ($A_{ij}>0$) versus not connected ($A_{ij}=0$). 

Average results over 100 replications are shown in Figures \ref{fig_poisson_auc} and \ref{fig_poisson_mse}.    In terms of predictive AUC, which is more relevant in practice, the low rank effects model substantially outperforms the Poisson model and the fixed rank nomination model, except for the largest values of $\alpha$ where the fixed rank nomination model performs slightly better.  

For the RMSE, the low rank effects model performs much better for all but the largest values of both $c$ and $\alpha$, which correspond to dense networks with high variation in node degrees.  In this setting for integer-weighted networks, one needs a larger sample size in order to obtain an accurate estimate of $\mathbf{P}$, which is consistent with the theoretical results in Theorem \ref{thm1} and Corollary \ref{cor1}.  

\begin{figure}
\begin{center}
\includegraphics[width=0.8\textwidth]{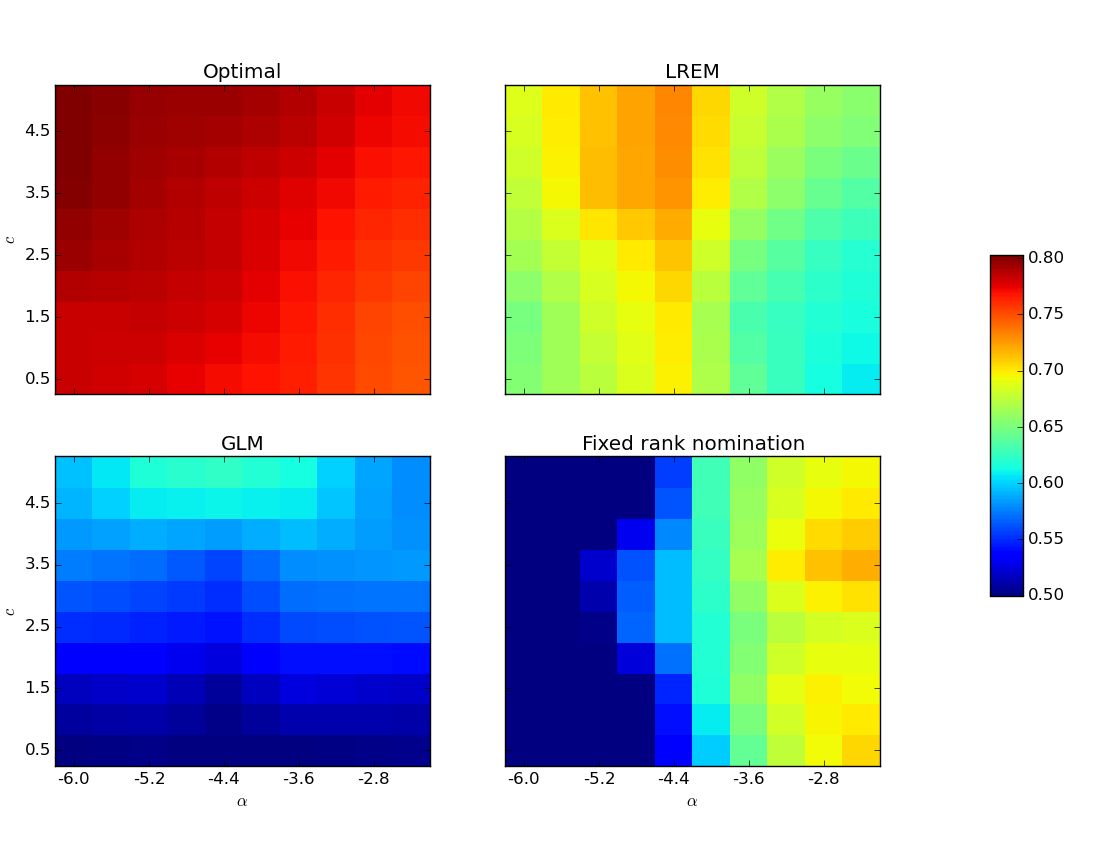}
\caption{Predictive AUC for integer-weighted weighted networks with various $\alpha$ and $c$. ``Optimal'' refers to the AUC based on the true mean matrix $\mathbf{P}$.}\label{fig_poisson_auc}
\end{center}

\begin{center}
\includegraphics[width=0.8\textwidth]{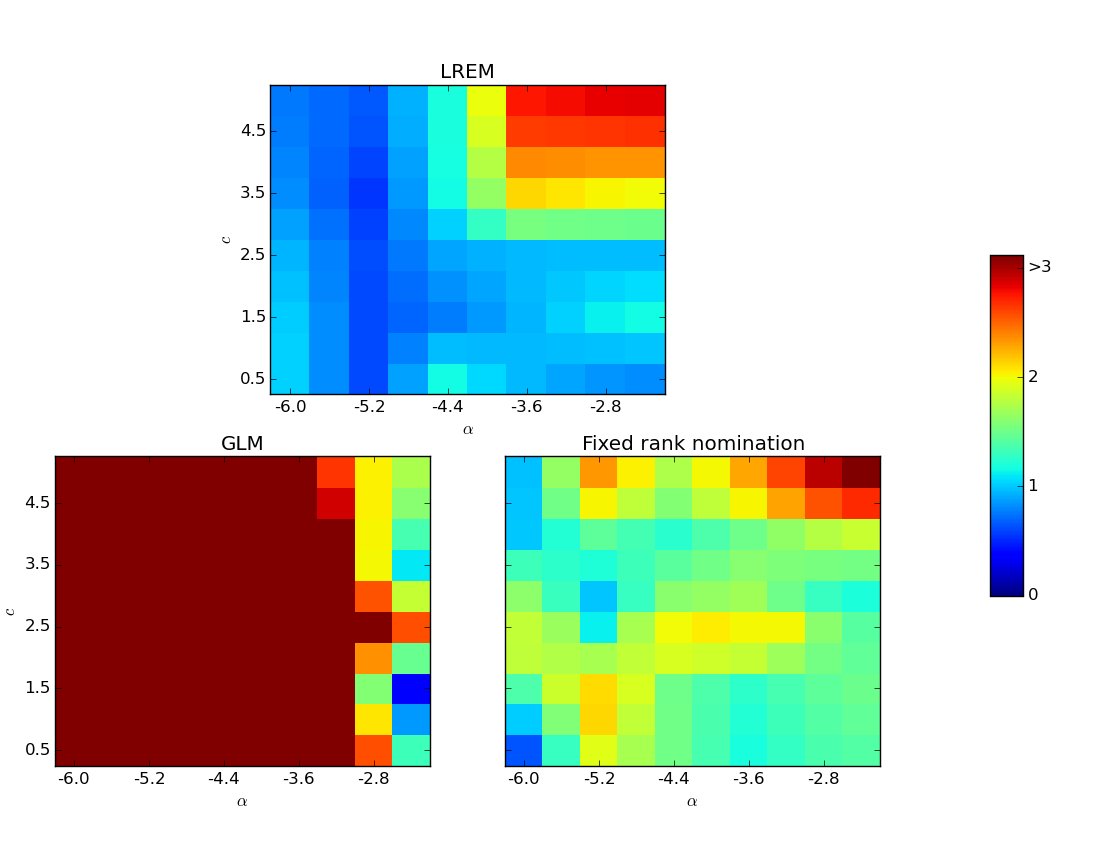}
\caption{RMSE for $\M{P}$ for integer-weighted weighted networks with various $\alpha$ and $c$. }\label{fig_poisson_mse}
\end{center}
\end{figure}

\section{Data examples}\label{sec_data_analysis}
Next, we apply the proposed low rank effects model to two real-world datasets.  To evaluate the performance, we randomly set 20\% of the entries in the adjacency matrices to 0 and compute the predictive AUC on this ``hold-out'' set.  This evaluation mechanism corresponds to the setting of partially observed networks discussed in \cite{Zhao2013}.  Reported results are averages over 20 repetitions.

\subsection{The Last.fm friendship data}
This dataset from the Last.fm music website friendships and 17,632 artists listened to or tagged by each user \citep{Cantador2011}. The friendship network contains 1,892 nodes (users) and 12,717 edges.  We constructed two edge attributes $\mathbf{X}_{\mathrm{lis(ten)}}$ and $\mathbf{X}_{\mathrm{tag}}$ as follows: let $\widetilde{X}_{\mathrm{lis},ij}$ and $\widetilde{X}_{\mathrm{tag},ij}$ be the number of artists who are listened to and tagged by, respectively, both users $i$ and $j$.    These counts were then normalized, setting $X_{\mathrm{lis},ij}=\widetilde{X}_{\mathrm{lis},ij} / \max_{ij}\{\widetilde{X}_{\mathrm{lis},ij}\}$ and $X_{\mathrm{tag},ij}=\widetilde{X}_{\mathrm{tag},ij} / \max_{ij}\{\widetilde{X}_{\mathrm{tag},ij}\}$.

The prediction results are shown in Figure \ref{fig_lastfm}.  The low rank effects model with covariates obtains the best AUC value of 0.876 at $r=42$ and $R=470$.  Although this value of AUC is likely to be overly optimistic, note that the predictive AUC of the low rank effects model is larger than 0.75 over the entire range of parameters $r$ and $R$, where as the logistic regression model only gives the AUC of 0.412.    This suggests that modeling low rank pairwise effects is important for this dataset.    The latent factor model (implemented via the package \texttt{amen} in R) failed to converge due to the size of the dataset. 

In Figure \ref{fig_lastfm_beta}, both $\widehat{\beta}_{\mathrm{lis}}$ and $\widehat{\beta}_{\mathrm{tag}}$ are positive and indicate that the Last.fm friendship network likely follows the principle of homophiliy. The rank constraint $r$ has very little effect on the estimates of the coefficients, while the estimates shrink toward 0 as the nuclear-norm constraint $R$ decreases due to the bias caused by a small $R$ and the fact that $\|\widehat{\boldsymbol{\Theta}}\|_{\max}\leq\|\widehat{\boldsymbol{\Theta}}\|_*\leq R$.

\begin{figure}
\begin{center}
\includegraphics[width=0.8\textwidth]{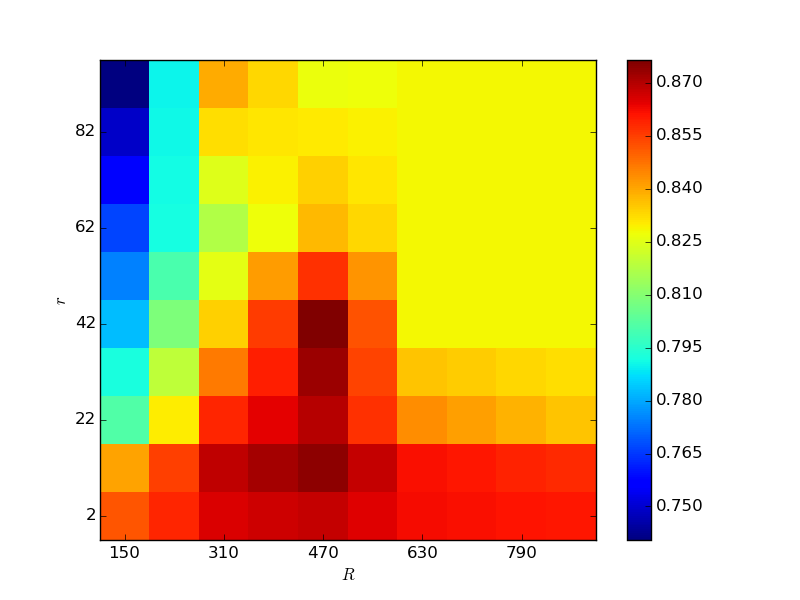}
\caption{Predictive AUC for the Last.fm music dataset with various tuning parameters $r$ and $R$.  }\label{fig_lastfm}
\end{center}

\begin{center}
\begin{subfigure}[b]{0.48\textwidth}
\includegraphics[width=1\textwidth]{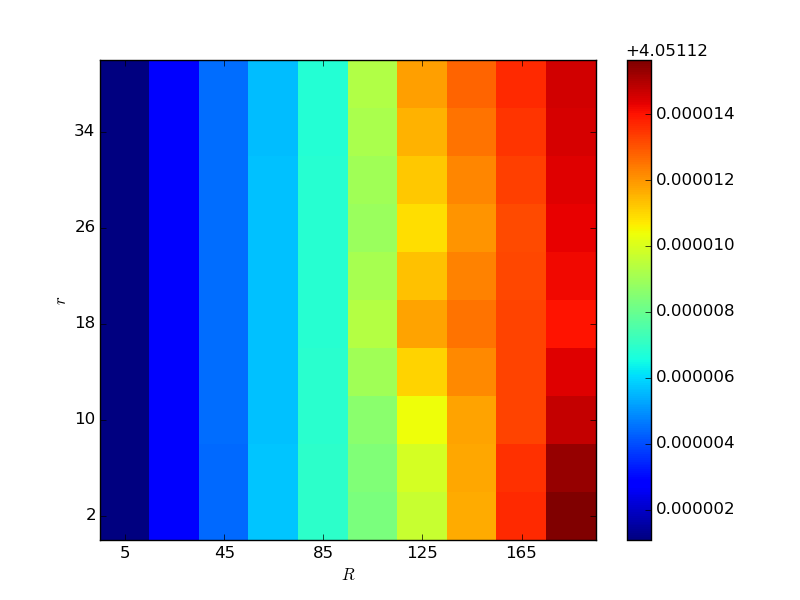}
\caption{$\widehat{\beta}_{\mathrm{lis}}$}
\end{subfigure}
\begin{subfigure}[b]{0.48\textwidth}
\includegraphics[width=1\textwidth]{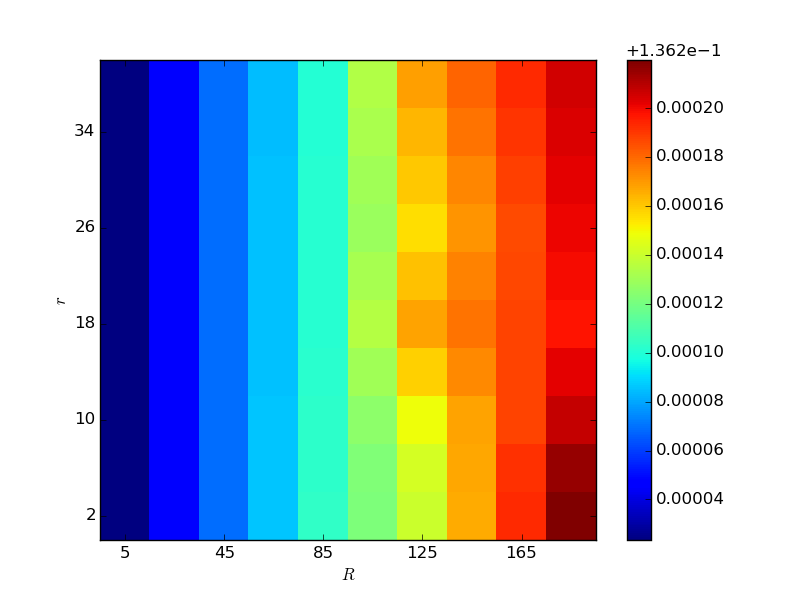}
\caption{$\widehat{\beta}_{\mathrm{tag}}$}
\end{subfigure}
\caption{Estimated coefficients for the Last.fm music dataset with various tuning parameters $r$ and $R$.}\label{fig_lastfm_beta}
\end{center}
\end{figure}

\subsection{The Elegans neural network data}
This dataset contains 
the neural network of the nematode worm C. elegans, which is a directed integer-weighted network with 297 nodes. In this network, an edge represents a synapse or a gap junction between two neurons \citep{Watts&Strogatz1998}, and the weight between a pair of nodes is the number of edges between two neurons. The mean weight is 29.69 and 2.66\% of pairs have non-zero weights. The original dataset does not contain any covariates. Therefore, we did not consider the classical GLM here, and the fixed rank nomination model was used as the benchmark model. Similar to the simulation studies for integer-weighted networks, we calculated the AUC based on ``classifying'' connected versus non-connected pairs.  

Figure \ref{fig_neural} shows the results from the low rank effects model.  The AUC obtains the maximum 0.824 at $r=26$ and $R=85$, which is roughly the same as the best performance of the fixed rank nomination model (AUC=0.821, fitted by 1,000 burn-in and 20,000 MCMC iterations, which is vastly more expensive computationally). The relatively high value of AUC indicates that there might be a low-rank effect associated with the observed network.

\begin{figure}
\begin{center}
\includegraphics[width=0.8\textwidth]{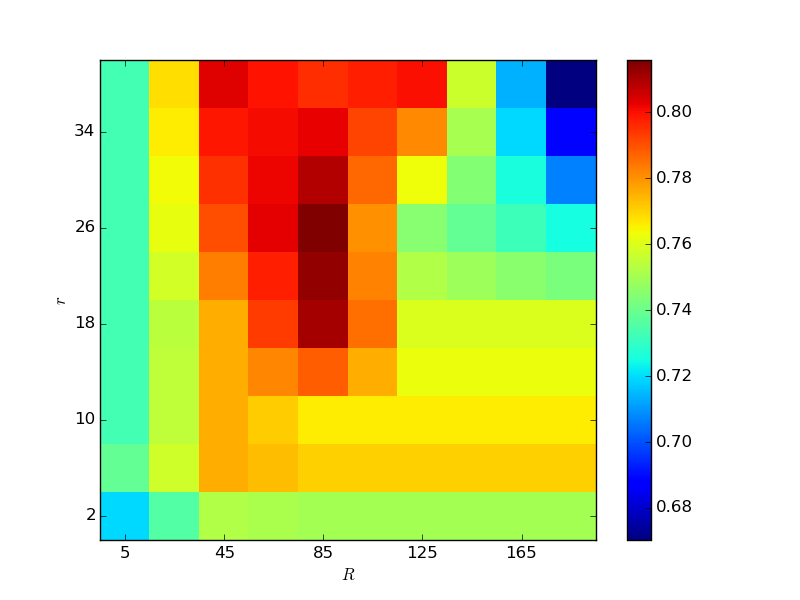}
\caption{Predictive AUC for the neural dataset with various values of  $r$ and $R$.}\label{fig_neural}
\end{center}
\end{figure}
\section{Discussion}\label{sec_discussion}
We proposed a generalized linear model with low-rank effects for network data with covariates, and an efficient projected gradient descent algorithm to fit this model.    The model is more general than the various latent space models\citep{Hoff2002, Hoff2007, Hoff2008, Ma2017}  because we do not require the effect matrix to be positive definite or symmetric, allowing for more general graph structures like bipartite graphs, and incorporating the directed case automatically.   The simultaneous work of \cite{Ma2017}   is the only scalable algorithm we are aware of for fitting relatively general latent space models, but it is still less general than ours;  and all previous work relied on MCMC and did not scale well at all.


  Figure \ref{fig_time} shows a simple comparison between the computational cost of our method and that of the latent factor model, for the simulation settings in this section.  For both methods, we show the relative cost for fitting binary networks described in Section \ref{lrem:sim:binary}. Compared to the case of $n=200$, it takes about 40 times of computational time for fitting the case of $n=2000$ for our method and about 120 times for the latent factor model.  The latent factor model becomes not feasible for networks with $10^5$ or more nodes.   

\begin{figure}
\begin{center}
\includegraphics[width=0.7\textwidth]{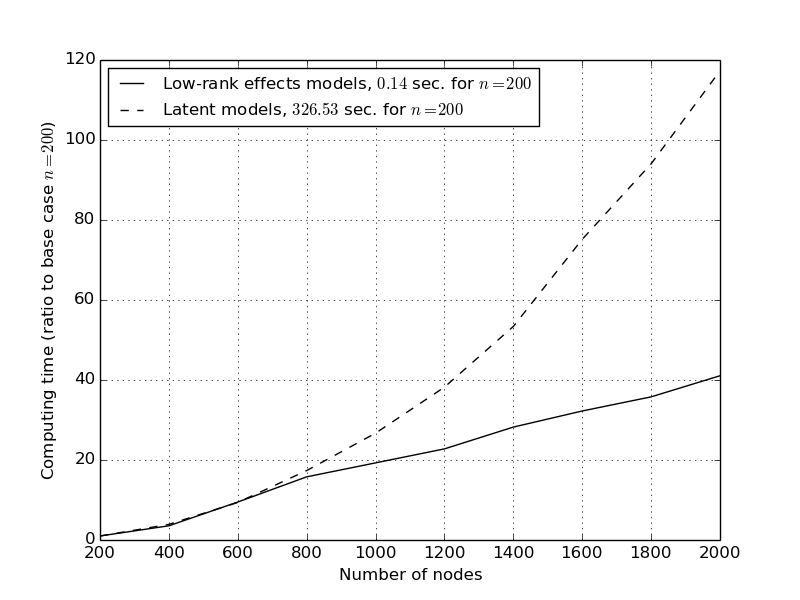}
\caption{Comparison of the computing time of the low rank effects model (using Python) and that of the latent factor model (using the R package \texttt{amen}), relative to their computing time when $n=200$. }\label{fig_time}
\end{center}
\end{figure}

There are several directions of future work to explore.  Any algorithm based on the SVD is in general considered not scalable to very large networks.  Boosting the computational speed of SVD-based algorithms usually relies on the sparsity of decomposed matrices, which does not apply to the low rank effects model even if the data network is sparse.  An alternative approach is the alternating direction method, which may find the global optimum when the estimator is obtained by minimizing the squared error loss under constraints.  However, generalizing the algorithm to the GLM setting is not trivial.   A stochastic gradient descent approach can also be applied to improve scalability.   

An obvious extenstion in the setting of high-dimensional covariates is to incorporate variable selection via penalties on $\V{\beta}$.  It should also be relatively straightforward to adapt this framework to modeling dynamic networks, where different networks are observed at different time points, with an underlying smoothly changing low rank probability matrix structure.      


\bibliographystyle{apalike}
\bibliography{ref}

\appendix
\section{Proof of theorems}
To establish consistency in Frobenius norm, we first state an inequality connecting the Frobenius norm to the Kullback-Leibler (KL) divergence, defined as
\begin{align*}
D_{KL}(f_{\mathbf{Q}_1}\|f_{\mathbf{Q}_2})&=n^{-2}\sum_{ij}\int^\infty_{-\infty} f_{q_{1,ij}}(a)\log\frac{f_{q_{1,ij}}(a)}{f_{q_{2,ij}}(a)}da,
\end{align*} 
where $\mathbf{Q}_1$ and $\mathbf{Q}_2$ are $n \times n$ matrices and $f_{\mathbf{Q}_1}$ and $f_{\mathbf{Q}_2}$ are the probability distributions of random matrices 
with mean $\mathbf{Q}_1$ and $\mathbf{Q}_2$ 
as defined in (\ref{lrem:canonical_form}).

Note that as a consequence of A3-A5, the $\xi$-th moment of $|A_{ij}|$ is uniformly bounded by some constant for each $\xi$, denoted by $M_\xi$, which does not depend on $n$.  Then using the uniform integrability given by the bounded parameter space, we have the following lemma.
\begin{lemma}\label{fro_tv_ineq}
Under assumptions A3-A5, we have
\[n^{-1}\|\mathbf{Q}_1-\mathbf{Q}_2\|_F\leq \sqrt{2}M_{1+\delta}^{\frac{1}{1+\delta}}D_{KL}^{\frac{\delta}{2+2\delta}}(f_{\mathbf{Q}_1}\|f_{\mathbf{Q}_2})\]
for some $\delta>0$.
\end{lemma}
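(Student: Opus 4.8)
The plan is to reduce the Frobenius-norm inequality to an entrywise estimate, bounding each mean difference $|q_{1,ij}-q_{2,ij}|$ by a fractional power of the per-entry total variation distance, and then to aggregate the entrywise bounds through the definition of $D_{KL}$. Since the means are the first moments of the exponential-family marginals in \eqref{lrem:canonical_form}, I would first write, for each pair $(i,j)$,
\[
|q_{1,ij}-q_{2,ij}|=\left|\int_{-\infty}^{\infty}a\,\big(f_{q_{1,ij}}(a)-f_{q_{2,ij}}(a)\big)\,da\right|\le\int_{-\infty}^{\infty}|a|\,\big|f_{q_{1,ij}}(a)-f_{q_{2,ij}}(a)\big|\,da.
\]

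The key step, which I expect to be the main obstacle, is to interpolate between the uniformly bounded $(1+\delta)$-moment and the total variation distance; this is exactly where the uniform integrability granted by A3--A5 is indispensable, since without control of the tails two distributions can be close in total variation yet have very different means. I would apply H\"older's inequality with conjugate exponents $1+\delta$ and $(1+\delta)/\delta$, splitting the density difference into its own $\tfrac{1}{1+\delta}$ and $\tfrac{\delta}{1+\delta}$ powers and attaching $|a|$ to the first factor, to obtain
\[
\int|a|\,|f_{q_{1,ij}}-f_{q_{2,ij}}|\,da\le\Big(\int|a|^{1+\delta}\,|f_{q_{1,ij}}-f_{q_{2,ij}}|\,da\Big)^{\frac{1}{1+\delta}}\Big(\int|f_{q_{1,ij}}-f_{q_{2,ij}}|\,da\Big)^{\frac{\delta}{1+\delta}}.
\]
The first factor is bounded using $\int|a|^{1+\delta}|f_{q_{1,ij}}-f_{q_{2,ij}}|\,da\le\int|a|^{1+\delta}(f_{q_{1,ij}}+f_{q_{2,ij}})\,da\le 2M_{1+\delta}$, where $M_{1+\delta}$ is the uniform moment bound and does not depend on $n$ or $(i,j)$; the second factor is $2\,\mathrm{TV}_{ij}$, twice the per-entry total variation distance. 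This yields $|q_{1,ij}-q_{2,ij}|\le C\,M_{1+\delta}^{1/(1+\delta)}\,\mathrm{TV}_{ij}^{\,\delta/(1+\delta)}$ for an explicit constant $C$.

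Finally I would convert the total variation distance to the KL divergence and aggregate. Pinsker's inequality gives $\mathrm{TV}_{ij}\le(\tfrac12 D_{ij})^{1/2}$, where $D_{ij}$ denotes the per-entry KL divergence, so that $D_{KL}=n^{-2}\sum_{ij}D_{ij}$ by its definition. Substituting, squaring, and averaging over entries gives
\[
n^{-2}\|\mathbf{Q}_1-\mathbf{Q}_2\|_F^2\le C'\,M_{1+\delta}^{2/(1+\delta)}\,n^{-2}\sum_{ij}D_{ij}^{\,\delta/(1+\delta)}.
\]
Because $\delta/(1+\delta)<1$, the map $t\mapsto t^{\delta/(1+\delta)}$ is concave, so Jensen's inequality pulls the average inside to give $n^{-2}\sum_{ij}D_{ij}^{\delta/(1+\delta)}\le D_{KL}^{\delta/(1+\delta)}$. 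Taking square roots produces $n^{-1}\|\mathbf{Q}_1-\mathbf{Q}_2\|_F\le\sqrt{C'}\,M_{1+\delta}^{1/(1+\delta)}\,D_{KL}^{\delta/(2+2\delta)}$, matching the stated exponents; the remaining work is the purely numerical bookkeeping of collecting the factors of two from H\"older's and Pinsker's inequalities into the constant $\sqrt{2}$.
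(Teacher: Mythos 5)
Your proof is correct and follows the same overall architecture as the paper's: reduce to the entrywise mean difference $\int|a|\,|f_{q_{1,ij}}-f_{q_{2,ij}}|\,d\mu$, interpolate that quantity against the uniform $(1+\delta)$-moment bound and the total variation distance, then pass to KL via Pinsker and aggregate using concavity of $t\mapsto t^{\delta/(1+\delta)}$. The one place you genuinely diverge is the interpolation step. The paper truncates the integral at a level $u_{ij}$, bounds the two pieces by (multiples of) $u_{ij}\|f_{q_{1,ij}}-f_{q_{2,ij}}\|_{TV}$ and $u_{ij}^{-\delta}M_{1+\delta}$, and optimizes over $u_{ij}$; you obtain the same entrywise bound of order $M_{1+\delta}^{1/(1+\delta)}\mathrm{TV}_{ij}^{\delta/(1+\delta)}$ in one line from H\"older's inequality with exponents $1+\delta$ and $(1+\delta)/\delta$ applied to the factorization $|a|\,|f_1-f_2|^{1/(1+\delta)}\cdot|f_1-f_2|^{\delta/(1+\delta)}$. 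The two devices are interchangeable here---the truncate-and-optimize argument is essentially a hand-rolled proof of that H\"older interpolation---but yours is shorter and makes the role of uniform integrability more transparent, while the paper's is more elementary. One small caveat applying equally to both derivations: carrying the factors of two through H\"older (or the optimized truncation), Pinsker, and Jensen yields a constant $4\cdot 2^{-\delta/(1+\delta)}\in(2,4]$ in front of $M_{1+\delta}^{2/(1+\delta)}D_{KL}^{\delta/(1+\delta)}$ rather than the stated $2$, so the $\sqrt{2}$ in the lemma should not be taken literally; this is immaterial for the consistency results the lemma supports.
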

\begin{proof}[Proof of Lemma \ref{fro_tv_ineq}]
Let \[\|f_{q_{1,ij}}-f_{q_{2,ij}}\|_{TV}=\sup_{g_{ij}:\mathbb{R}\rightarrow[-1,1]}\int g_{ij}(a)(f_{p_{ij}}(a)-f_{q_{ij}}(a))d\mu(a),\] 
where $\mu$ is the Lebesgue or counting measure. Then,
\begin{align*}
&\quad \|\mathbf{Q}_1-\mathbf{Q}_2\|_F^2 \\
&\leq\sum_{ij}\bigg(\int^\infty_0 |a||f_{q_{1,ij}}(a)-f_{q_{2,ij}}(a)|d\mu(a)\bigg)^2\\
&\leq\sum_{ij}\bigg(u_{ij}\int^{u_{ij}}_0|f_{q_{1,ij}}(a)-f_{q_{2,ij}}(a)|d\mu(a) +u_{ij}^{-t}\int^\infty_{u_{ij}} |a|^{1+\delta}(f_{q_{1,ij}}(a)+f_{q_{2,ij}}(a))d\mu(a)\bigg)^2 \\
&\leq\sum_{ij}\bigg(2u_{ij}\|f_{q_{1,ij}}-f_{q_{2,ij}}\|_{TV} +2u_{ij}^{-t}M_{1+\delta}\bigg)^2
\end{align*}
As a function of $u_{ij}$, the minimum of $u_{ij}\|f_{q_{1,ij}}-f_{q_{2,ij}}\|_{TV} +u_{ij}^{-\delta}M_{1+\delta}$  is obtained by choosing $u_{ij}=\delta^\frac{1}{1+\delta}M_{1+\delta}^{\frac{1}{1+\delta}}\|f_{q_{1,ij}}-f_{q_{2,ij}}\|_{TV}^{-\frac{\delta}{1+\delta}}$ and so
\begin{align*}
n^{-2}\|\mathbf{Q}_1-\mathbf{Q}_2\|_F^2&\leq n^{-2}\sum_{ij}(\delta^{\frac{1}{1+\delta}}+\delta^{-\frac{\delta}{1+\delta}})^2M_{1+\delta}^{\frac{2}{1+\delta}}\|f_{q_{1,ij}}-f_{q_{2,ij}}\|_{TV}^{\frac{2\delta}{1+\delta}} \\
&\leq 4n^{-2}M_{1+\delta}^{\frac{2}{1+\delta}}\sum_{ij}\|f_{q_{1,ij}}-f_{q_{2,ij}}\|_{TV}^{\frac{2\delta}{1+\delta}} \\
&\leq 2M_{1+\delta}^{\frac{2}{1+\delta}}D_{KL}^{\frac{\delta}{1+\delta}}(f_{\mathbf{Q}_1}\|f_{\mathbf{Q}_2})
\end{align*}
for any $\delta>0$. The last inequality is given by Pinsker's inequality.
\end{proof}

\begin{proof}[Proof of Theorem \ref{thm1}]
We define a feasible set of $(\mathbf{\Theta},\boldsymbol{\beta})$ as
\begin{align*}
\mathcal{T}&=\{(\mathbf{\Theta},\boldsymbol{\beta}):\|\mathbf{\Theta}\|_*\leq \sqrt{r}nK_\theta,\|\boldsymbol{\beta}\|_2\leq K_\beta\},
\end{align*}
and a corresponding estimator as
\begin{align}
(\widetilde{\mathbf{\Theta}},\widetilde{\boldsymbol{\beta}})=\operatorname*{\arg\max}_{(\mathbf{\Theta},\boldsymbol{\beta})\in\mathcal{T}}\ell_{\mathbf{A},\mathcal{X}}(\mathbf{\Theta},\boldsymbol{\beta}). \label{opt4}
\end{align}
Note that when $R=\sqrt{r}nK_\theta$ and $K_\beta$ is large enough, the solution for (\ref{opt4}) is the same as that for (\ref{opt2}).
Let $h(\mathbf{B},\mathbf{c}):=\mathbb{E}[\ell_{\mathbf{A},\mathbf{X}}(\mathbf{B},\mathbf{c})]$. Note that the maximum likelihood criterion in (\ref{opt3}) ensures that $\ell_{\mathbf{A},\mathbf{X}}(\widehat{\mathbf{\Theta}},\widehat{\boldsymbol{\beta}})\geq \ell_{\mathbf{A},\mathbf{X}}(\mathbf{\Theta},\boldsymbol{\beta})$. Hence, we have
\begin{align}
 n^2D_{KL}(f_{\mathbf{P}}\|f_{\widehat{\mathbf{P}}})&=h(\mathbf{\Theta},\boldsymbol{\beta})-h(\widehat{\mathbf{\Theta}},\widehat{\boldsymbol{\beta}}) \notag\\
&\leq \ell_{\mathbf{A},\mathbf{X}}(\widehat{\mathbf{\Theta}},\widehat{\boldsymbol{\beta}})- \ell_{\mathbf{A},\mathbf{X}}(\mathbf{\Theta},\boldsymbol{\beta}) +h(\mathbf{\Theta},\boldsymbol{\beta})-h(\widehat{\mathbf{\Theta}},\widehat{\boldsymbol{\beta}}) \notag\\
&=\mathrm{tr}((\mathbf{A}-\mathbf{P})^\top(\widehat{\mathbf{\Theta}}-\mathbf{\Theta})) \notag \\
&\quad +\sum^m_{k=1}(\widehat{\beta}_k-\beta_k)\mathrm{tr}((\mathbf{A}-\mathbf{P})^\top\mathbf{X}_k).  \label{ineq1}
\end{align}

To see the vanishing of the first term as $n$ goes to infinity,   one can derive that
\begin{align*}
\mathrm{tr}((\mathbf{A}-\mathbf{P})^\top(\widehat{\mathbf{\Theta}}-\mathbf{\Theta}))&\leq 2\sup_{\mathbf{\Xi}\in\mathcal{T}}|\mathrm{tr}((\mathbf{A}-\mathbf{P})^\top\mathbf{\Xi})|  \\ 
& \leq 2\sigma_1(\mathbf{A}-\mathbf{P})\sup_{\mathbf{\Xi}\in\mathcal{T}}\|\mathbf{\Xi}\|_* \\ 
& \leq 2\sqrt{r}nR^*\sigma_1(\mathbf{A}-\mathbf{P})
\end{align*}
by matrix norm inequalities $|\mathrm{tr}(\mathbf{B^\top C})|\leq \|\mathbf{B}\|_2\|\mathbf{C}\|_*$ and $\|\mathbf{C}\|_*\leq \sqrt{r}\|\mathbf{C}\|_F\leq \sqrt{r}n\|\mathbf{C}\|_{\max}$ for $\mathrm{rank}\mathbf{C}\leq r$.
Together with Markov's inequality and the fact that 
\begin{align*}
\mathbb{E}[\sigma_1(\mathbf{A}-\mathbf{P})] &\leq C_0\bigg(\Big(\max_i\sum_j\mathbb{E}[A_{ij}^2]\Big)^{\frac{1}{2}}+\Big(\max_j\sum_i\mathbb{E}[A_{ij}^2]\Big)^{\frac{1}{2}}+\sum_{ij}\mathbb{E}[A_{ij}^4]\Big)^{\frac{1}{4}}\bigg) \\
&\leq C_0\sqrt{n}(2\sqrt{M_2}+\sqrt[4]{M_4})
\end{align*}
 by Latala's theorem \citep{Lataa2005} where $C_0$ is some universal constant, we have 
\begin{align}
\mathbb{P}\Big(2\sup_{\mathbf{\Xi}\in\mathcal{T}}|\mathrm{tr}((\mathbf{A}-\mathbf{P})^\top\mathbf{\Xi})|\geq n^2\delta\Big) &\leq \mathbb{P}(2\sqrt{r}nR^*\sigma_1(\mathbf{A}-\mathbf{P})\geq n^2t) \notag\\
&\leq \frac{2\sqrt{r}R\mathbb{E}[\sigma_1(\mathbf{A}-\mathbf{P})]}{nt}\notag \\
&\leq \frac{2\sqrt{r}R C_0(2\sqrt{M_2}+\sqrt[4]{M_4})}{\sqrt{n}t}. \label{ineq1term1}
\end{align}

For the second term in (\ref{ineq1}), 
\begin{align}
&\quad \mathbb{P}\bigg(\Big|\sum^m_{k=1}(\widehat{\beta}_k-\beta_k)\mathrm{tr}((\mathbf{A}-\mathbf{P})^\top\mathbf{X}_k)\Big|\geq n^2t\bigg) \notag\\
&\leq \mathbb{P}\bigg(2\sup_{\|\boldsymbol{\beta}\|_{\max}\leq K_\beta}\|\boldsymbol{\beta}\|_{\max}\Big|\sum^m_{k=1}\mathrm{tr}((\mathbf{A}-\mathbf{P})^\top\mathbf{X}_k)\Big|\geq n^2t\bigg) \notag\\
&\leq \frac{4K_{\beta}^2\operatorname{Var}\big(\sum^m_{k=1}\mathrm{tr}(\mathbf{A}^\top\mathbf{X}_k)\big)}{n^4t^2} \notag\\
&\leq \frac{4K_{\beta}^2K_x^2M_2}{n^2t^2}. \label{ineq1term2}
\end{align}
Thus, the desired result follows from (\ref{ineq1term1}), (\ref{ineq1term2}), and Lemma \ref{fro_tv_ineq}.
\end{proof}

\begin{proof}[Proof of Corollary \ref{cor1}]
The result is obtained by replacing (\ref{ineq1term1}) with Talagrand's inequality
\begin{align*}
&\quad \mathbb{P}(2\sqrt{r}nR^*\sigma_1(\mathbf{A}-\mathbf{P})\geq n^2t) \\
&\leq \mathbb{P}\Big(|\sigma_1(\mathbf{A}-\mathbf{P})-\mathbb{E}[\sigma_1(\mathbf{A}-\mathbf{P})]|\geq \frac{nt}{2\sqrt{r}R}-C_0(2\sqrt{M_2}+\sqrt[4]{M_4})\sqrt{n}\Big)\notag \\
&\leq C_1\exp\Big(-C_2\Big(\frac{nt}{2\sqrt{r}R}-C_0(2\sqrt{M_2}+\sqrt[4]{M_4})\sqrt{n}\Big)^2_+\Big),
\end{align*}
where $C_1$ and $C_2$ are some universal constants, and (\ref{ineq1term2}) with Hoeffiding's inequality
\[\mathbb{P}\bigg(\Big|\sum^m_{k=1}(\widehat{\beta}_k-\beta_k)\mathrm{tr}((\mathbf{A}-\mathbf{P})^\top\mathbf{X}_k)\Big|\geq n^2t\bigg)\leq 2\exp\Big(-\frac{n^2t^2}{4K_\beta^2K_x^2}\Big).\]
\end{proof}

\begin{proof}[Proof of Corollary \ref{col2}]
By Taylor's expansion, for some $\eta_{ij}$ between $\widehat{p}_{ij}$ and $p_{ij}$ for $i,j=1,\dots,n$,
\begin{align*}
 \|\widehat{\mathbf{\Theta}}-\mathbf{\Theta} + \mathcal{X}\otimes(\widehat{\boldsymbol{\beta}}-\boldsymbol{\beta})\|_F&=\bigg(\sum_{ij}\big(L(\widehat{p}_{ij})-L(p_{ij})\big)^2\bigg)^{\frac{1}{2}} \\
&\leq \sup_{ij}L'(\eta_{ij})\|\widehat{\mathbf{P}}-\mathbf{P}\|_F \\
&\leq \frac{1}{\inf_{ij}b''(L(\eta_{ij}))}\|\widehat{\mathbf{P}}-\mathbf{P}\|_F \\
&\leq \frac{1}{\inf_{ij}\mathrm{Var}(A_{ij})}\|\widehat{\mathbf{P}}-\mathbf{P}\|_F.
\end{align*}
Hence, the convergence of the linear predictor $\widehat{\mathbf{\Theta}} + \mathbf{X}\otimes\widehat{\boldsymbol{\beta}}$ follows from $\inf_{ij}\mathrm{Var}(A_{ij})$ being bounded away from 0. Since

\begin{align*}
 \frac{\big|\mathrm{tr}\Big((\widehat{\mathbf{\Theta}}-\mathbf{\Theta})^\top\big(\mathcal{X}\otimes(\widehat{\boldsymbol{\beta}}-\boldsymbol{\beta})\big)\Big)\big|}{\|\widehat{\mathbf{\Theta}}-\mathbf{\Theta}\|_F\|\mathcal{X}\otimes(\widehat{\boldsymbol{\beta}}-\boldsymbol{\beta})\|_F}&\leq \frac{\sum^n_{i=1}\sigma_i(\widehat{\mathbf{\Theta}}-\mathbf{\Theta})\sigma_i\big(\mathcal{X}\otimes(\widehat{\boldsymbol{\beta}}-\boldsymbol{\beta})\big)}{\|\widehat{\mathbf{\Theta}}-\mathbf{\Theta}\|_F\|\mathbf{X}\otimes(\widehat{\boldsymbol{\beta}}^*-\boldsymbol{\beta})\|_F} \\
&= \frac{\sum^{2r}_{i=1}\sigma_i(\widehat{\mathbf{\Theta}}-\mathbf{\Theta})\sigma_i\big(\mathcal{X}\otimes(\widehat{\boldsymbol{\beta}}-\boldsymbol{\beta})\big)}{\|\widehat{\mathbf{\Theta}}-\mathbf{\Theta}\|_F\|\mathcal{X}\otimes(\widehat{\boldsymbol{\beta}}-\boldsymbol{\beta})\|_F} \\
&\leq \frac{\Big(\sum^{2r}_{i=1}\sigma_i^2\big(\mathcal{X}\otimes(\widehat{\boldsymbol{\beta}}-\boldsymbol{\beta})\big)\Big)^{\frac{1}{2}}}{\|\mathcal{X}\otimes(\widehat{\boldsymbol{\beta}}-\boldsymbol{\beta})\|_F} \\
&\leq \sqrt{\delta}
\end{align*}
by the condition on the spectral distribution of $\mathcal{X}\otimes\boldsymbol{\beta}$, we see that
\begin{align*}
&\quad \|\widehat{\mathbf{\Theta}}-\mathbf{\Theta} +\mathcal{X}\otimes(\widehat{\boldsymbol{\beta}}-\boldsymbol{\beta})\|_F^2 \\
&=\|\widehat{\mathbf{\Theta}}-\mathbf{\Theta}\|_F^2+\|\mathcal{X}\otimes(\widehat{\boldsymbol{\beta}}-\boldsymbol{\beta})\|_F^2+2\mathrm{tr}\Big((\widehat{\mathbf{\Theta}}-\mathbf{\Theta})^\top\big(\mathcal{X}\otimes(\widehat{\boldsymbol{\beta}}-\boldsymbol{\beta})\big)\Big)\\
&\geq \|\widehat{\mathbf{\Theta}}-\mathbf{\Theta}\|_F^2+\|\mathcal{X}\otimes(\widehat{\boldsymbol{\beta}}-\boldsymbol{\beta})\|_F^2-2\sqrt{\delta}\|\widehat{\mathbf{\Theta}}-\mathbf{\Theta}\|_F\|\mathcal{X}\otimes(\widehat{\boldsymbol{\beta}}-\boldsymbol{\beta})\|_F\\
&\geq (1-\sqrt{\delta})(\|\widehat{\mathbf{\Theta}}-\mathbf{\Theta}\|_F^2+\|\mathcal{X}\otimes(\widehat{\boldsymbol{\beta}}-\boldsymbol{\beta})\|_F^2).
\end{align*}

Thus, by Theorem \ref{thm1},
\[n^{-1}\|\widehat{\mathbf{\Theta}}-\mathbf{\Theta}\|_F\stackrel{p}{\longrightarrow} 0\]
and 
\[(\widehat{\boldsymbol{\beta}}-\boldsymbol{\beta})^\top\Big(n^{-2}\sum_{ij}\mathbf{x}_{ij}\mathbf{x}_{ij}^\top\Big)(\widehat{\boldsymbol{\beta}}-\boldsymbol{\beta})=n^{-2}\|\mathcal{X}\otimes(\widehat{\boldsymbol{\beta}}-\boldsymbol{\beta})\|_F^2\stackrel{p}{\longrightarrow} 0. \]
\end{proof}

\begin{proof}[Proof of Theorem \ref{thm2}]
Let $\widehat{\mathbf{\Theta}}^*=\operatorname*{\arg\min}_{\mathbf{\Xi}\in\mathcal{T}}\|\mathbf{\Xi}-\mathbf{\Theta}\|_F$.
\begin{align}
n^2D_{KL}(f_{\mathbf{P}}\|f_{\widehat{\mathbf{P}}})
&=h(\mathbf{\Theta},\boldsymbol{\beta})-h(\widehat{\mathbf{\Theta}},\widehat{\boldsymbol{\beta}}) \notag\\
&\leq \ell_{\mathbf{A},\mathbf{X}}(\widehat{\mathbf{\Theta}},\widehat{\boldsymbol{\beta}})-h(\widehat{\mathbf{\Theta}},\widehat{\boldsymbol{\beta}}) - \ell_{\mathbf{A},\mathbf{X}}(\widehat{\mathbf{\Theta}}^*,\boldsymbol{\beta}) + h(\widehat{\mathbf{\Theta}}^*,\boldsymbol{\beta}) \label{lrem:dl_diff} \\
&\quad -h(\widehat{\mathbf{\Theta}}^*,\boldsymbol{\beta})+h(\mathbf{\Theta},\boldsymbol{\beta})\notag\\
&=\mathrm{tr}((\mathbf{A}-\mathbf{P})^\top(\widehat{\mathbf{\Theta}}-\widehat{\mathbf{\Theta}}^*))+\sum^m_{k=1}(\widehat{\beta}_k-\beta_k)\mathrm{tr}((\mathbf{A}-\mathbf{P})^\top\mathbf{X}_k) \notag\\
&\quad + \mathrm{tr}(\mathbf{P}^\top(\mathbf{\Theta}-\widehat{\mathbf{\Theta}}^*)) + \sum_{ij}\big(b(\widehat{\theta}_{ij}^*+\mathbf{x}_{ij}^\top\boldsymbol{\beta}) - b(\theta_{ij}+\mathbf{x}_{ij}^\top\boldsymbol{\beta})\big) \label{lrem:misspecification}
\end{align}
The first two terms above converge to 0 in probability by a similar argument in the proof of Theorem \ref{thm1}.
Note that 
\[\mathrm{tr}(\mathbf{P}^\top(\mathbf{\Theta}-\widehat{\mathbf{\Theta}}^*))\leq\sigma_1(\mathbf{P})\|\mathbf{\Theta}-\widehat{\mathbf{\Theta}}^*\|_*\leq n\sum^n_{k=r+1}\sigma_k(\mathbf{\Theta})\]
and that, by Taylor's expansion, for some $\xi_{ij}$ between $\widehat{\theta}_{ij}^*+\mathbf{x}_{ij}^\top\boldsymbol{\beta}$ and $\theta_{ij}+\mathbf{x}_{ij}^\top\boldsymbol{\beta}$,
\begin{align*}
\sum_{ij}\big(b(\widehat{\theta}_{ij}^*+\mathbf{x}_{ij}^\top\boldsymbol{\beta}) - b(\theta_{ij}+\mathbf{x}_{ij}^\top\boldsymbol{\beta})\big)&= \sum_{ij} b'(\xi_{ij})(\widehat{\theta}_{ij}^*-\theta_{ij}) \\
&\leq K_p\sum_{ij}|\widehat{\theta}_{ij}^*-\theta_{ij}| \\
&\leq nK_p\|\widehat{\mathbf{\Theta}}^*-\mathbf{\Theta}\|_F \\
&\leq nK_p\|\widehat{\mathbf{\Theta}}^*-\mathbf{\Theta}\|_* \\
&= nK_p\sum^n_{k=r+1}\sigma_k(\mathbf{\Theta}),
\end{align*}
Therefore,
\[D_{KL}(f_{\mathbf{P}}\|f_{\widehat{\mathbf{P}}})=O_p\bigg(n^{-1}\sum^n_{k=r+1}\sigma_k(\mathbf{\Theta})\bigg)\]
and by Lemma \ref{fro_tv_ineq}, for $\delta>0$,
\[n^{-1}\|\widehat{\mathbf{P}}-\mathbf{P}\|_F=O_p\bigg(M_{1+\delta}^{\frac{1}{1+\delta}}\Big(n^{-1}\sum^n_{k=r+1}\sigma_k(\mathbf{\Theta})\Big)^{\frac{\delta}{2+2\delta}}\bigg).\]
\end{proof}

\end{document}